\newcommand{\field}{\mathbb{F}}
\newcommand{\D}{\mathcal{D}}
\newcommand{\A}{\mathfrak{t}}
\newcommand{\B}{\mathfrak{u}}
\newcommand{\F}{\mathcal{F}}
\newcommand{\degen}{\unlhd}
\newcommand{\Real}{\mathbb R}
\newcommand{\Int}{\mathbb Z}
\newcommand{\mset}[1]{\bar #1}
\newcommand{\myvec}[1]{\mathbf #1}
\newcommand{\supp}{\mathrm{supp}}
\newcommand{\poly}{\mathrm{poly}}
\newcommand{\identi}{\mathrm{id}}
\newcommand{\myspan}[1]{\mathrm{span}\{#1\}}
\newcommand{\Nn}{\mathcal{N}}
\newcommand{\Nat}{\mathbb N}
\newcommand{\braket}[1]{\langle #1 \rangle}
\newtheorem{theorem}{Theorem}[section]
\newtheorem{proposition}{Proposition}[section]
\newtheorem{definition}{Definition}[section]
\newtheorem{lemma}{Lemma}[section]
\begin{document}

\author{%
Fran{\c c}ois Le Gall \\ 
     Department of Computer Science\\
     Graduate School of Information Science and Technology\\
    The University of Tokyo\\
    \texttt{legall@is.s.u-tokyo.ac.jp} }
\title{Powers of Tensors and Fast Matrix Multiplication}
\date{}
\maketitle

\maketitle
\begin{abstract}
This paper presents a method to analyze the powers of a given trilinear form (a special kind of algebraic constructions also called a tensor) and obtain upper bounds on the asymptotic complexity of matrix multiplication. Compared with existing approaches, this method is based on convex optimization, and thus has polynomial-time complexity. As an application, we use this method to study powers of the construction given by Coppersmith and Winograd [Journal of Symbolic Computation, 1990] and obtain the upper bound $\omega<2.3728639$ on the exponent of square matrix multiplication, which slightly improves the best known upper bound. 
\end{abstract}

\section{Introduction}
Matrix multiplication is one of the most fundamental tasks in mathematics and computer science.
While the product of two $n\times n$ matrices over a field can naturally be computed
in $O(n^3)$ arithmetic operations, Strassen showed in 1969 that
$O(n^{2.81})$ arithmetic operations are enough \cite{Strassen69}. 
The discovery of this algorithm for matrix multiplication
with subcubic complexity   
gave rise to a new area of research, 
where the central question is to determine the value of the 
exponent of square matrix multiplication, denoted~$\omega$, and defined 
as the minimal value such that two
$n\times n$ matrices over a field can 
be multiplied using $O(n^{\omega+\varepsilon})$ arithmetic operations for any $\varepsilon>0$.
It has been widely conjectured that $\omega=2$ and several conjectures in combinatorics and group theory, 
if true, would lead to this result \cite{Alon+13,Cohn+FOCS05,Cohn+SODA13,Coppersmith+90}. 
However, 
the best upper bound obtained so far is $\omega<2.38$, as we explain below. 

Coppersmith and Winograd \cite{Coppersmith+90} showed in 1987 that  $\omega<2.3754770$. 
Their approach can be described as follows. A trilinear form is, 
informally speaking, a three-dimensional array with coefficients 
in a field $\field$. For any trilinear form~$t$ one can define its 
border rank, denoted $\underline R(t)$, which is a positive integer characterizing 
the number of arithmetic operations needed to compute the form.
For any trilinear form~$t$ and any real number $\rho\in[2,3]$, 
one can define a real number $V_\rho(t)$, called the \emph{value}
of the trilinear form. The theory developed by Sch\"onhage \cite{Schonhage81}
shows that, for any $m\ge 1$ and any $\rho\in[2,3]$, the following
statement hold:
\begin{equation}\label{statement}
\left(V_\rho(t^{\otimes m})\right)^{1/m}\ge \underline{R}(t) \:\:\Longrightarrow\:\: \omega\le \rho.
\end{equation}
Here the notation $t^{\otimes m}$ represents 
the trilinear form obtained by taking the $n$-th tensor power of~$t$. 
Coppersmith and Winograd presented a specific trilinear form $\A$, obtained by
modifying a construction given earlier by Strassen 
\cite{StrassenFOCS86}, computed its border rank $\underline{R}(\A)$,
and introduced deep techniques to estimate the value $V_\rho(\A)$.
In particular, they showed how
a lower bound $\tilde V_{\rho}(\A)$ on $V_\rho(\A)$ can be obtained for 
any $\rho\in[2,3]$ by solving an optimization problem.
Solving this optimization problem, they obtained the upper bound 
$\omega<2.3871900$, via Statement (\ref{statement}) with $t=\A$ 
and $m=1$, by finding the smallest $\rho$ such that 
$\tilde V_{\rho}(\A)\ge \underline{R}(\A)$.
They then proceeded to study the tensor power $\A^{\otimes 2}$ and
showed that, despite several new technical difficulties, a similar approach can be 
used to reduce the computation of a lower bound $\tilde V_\rho(\A^{\otimes 2})$ 
on $V_\rho(\A^{\otimes 2})$ to solving
another optimization problem of several variables. 
They discovered that $\tilde V_\rho(\A^{\otimes 2})>[\tilde V_\rho(\A)]^2$, 
due to the fact that the analysis of $\A^{\otimes 2}$ was finer, thus giving
a better upper bound on $\omega$
via Statement (\ref{statement}) with $t=\A$ 
and $m=2$.
Solving numerically the new optimization problem, they obtained
the upper bound $\omega<2.3754770$.

In view of the improvement obtained by taking the second tensor power, a natural question 
was to investigate higher powers of the construction $\A$ by Coppersmith and Winograph.
Investigating the third power was explicitly mentioned as an open problem in \cite{Coppersmith+90}.
More that twenty years later, Stothers showed that, while the third power does not seem to 
lead to any improvement, the fourth power does give an improvement \cite{Stothers10} (see also \cite{Davie+13}).
The improvement was obtained again via Statement (\ref{statement}), by showing how to reduce the 
computation of $V_\rho(\A^{\otimes 4})$ to solving a non-convex optimization problem.
The upper bound $\omega<2.3736898$ was obtained in \cite{Davie+13,Stothers10} by finding numerically a
solution of this optimization problem. It was later discovered that that solution was not optimal, and the improved 
upper bound $\omega<2.3729269$ was given in~\cite{WilliamsSTOC12} 
by exhibiting a better solution of the same optimization problem. 
Independently, Vassilevska Williams~\cite{WilliamsSTOC12} 
constructed a powerful and general framework to analyze recursively powers of a class of trilinear forms,
including the trilinear form $\A$ by Coppersmith and Winograd, and showed how
to automatically reduce, for any form $t$ in this class 
and any integer $m\ge 2$,
the problem of obtaining lower bounds on $V_\rho(t^{\otimes m})$
to solving (in general non-convex) optimization problems.
The upper bound $\omega<2.3729$ was obtained~\cite{WilliamsFull}
by applying this framework with $t=\A$ and $m=8$, and numerically solving this optimization 
problem.\footnote{Note that, while the upper bound on  
$\omega$ obtained for the eighth power is stated as $\omega<2.3727$ 
in the conference version~\cite{WilliamsSTOC12},
the statement has been corrected to $\omega<2.3729$ in the most recent version \cite{WilliamsFull},
since the previous bound omitted some necessary constraints in the optimization problem. 
Our results confirm the value of the latter bound, and increase its precision.}
A natural question is to determine what bounds on $\omega$ can be obtained by studying $\A^{\otimes m}$ for $m>8$.
One may even hope that, when $m$ goes to infinity, the upper bound on $\omega$ goes to two. 
Unfortunately, this question can hardly be answered by this approach since the 
optimization problems are highly non-convex and become intractable even for modest values of~$m$.

In this paper we show how to modify the framework developed in \cite{WilliamsSTOC12} in such a way that
the computation of $V_\rho(\A^{\otimes m})$ reduces to solving $\poly(m)$ instances of \emph{convex} 
optimization problems, each having $\poly(m)$ variables.
From a theoretical point a view,
since a solution of such convex
problems can be found in polynomial time, via 
Statement~(\ref{statement}) we obtain an algorithm to derive an upper bound on~$\omega$ 
from $\A^{\otimes m}$ in time polynomial in $m$. 
From a practical point of view, the
convex problems we obtain can also be solved efficiently, and have several desirable properties
(in particular, the optimality of a solution can be guaranteed by using the dual problem).
We use this method to analyze $\A^{\otimes 16}$
and $\A^{\otimes 32}$,
and obtain the new upper bounds on~$\omega$ described in Table~\ref{table:chart}.
Besides leading to an improvement for~$\omega$, these results 
strongly suggest that studying powers higher than~32 
will give only negligible improvements.

Our method is actually more general and can be used to efficiently 
obtain lower bounds on $V_\rho(t\otimes t')$ for any trilinear forms $t$ and $t'$ that have
a structure ``similar" to~$\A$. 
Indeed, considering possible future applications of our approach, 
we have been attentive of stating our techniques as generally as possible.
To illustrate this point, we work out in the appendix the application of our method to an asymmetric trilinear form, 
originally proposed in \cite{Coppersmith+90}. \vspace{-2mm}

\begin{table}[h!]
\renewcommand\arraystretch{1}
\begin{center}
\caption{Upper bounds on $\omega$ obtained by analyzing the $m$-th power of the 
construction $\A$ by Coppersmith and Winograd.}\label{table:chart}\vspace{3mm}
\begin{tabular}{|c|l|l|}
\hline
$m$& Upper bound& Reference\bigstrut\\ 
\hline
1& $\omega<2.3871900$ & Ref.~\cite{Coppersmith+90}\\
\hline
2&  $\omega<2.3754770$ & Ref.~\cite{Coppersmith+90}\\
\hline
4& $\omega<2.3729269$& Ref.~\cite{WilliamsSTOC12}\\
\hline
\multirow{2}{*}{8}& $\multirow{2}{*}{$\omega<2.3728642$}$ & this paper (Section \ref{sub:mixed})\\
&& ($\omega<2.3729$ given in Ref.~\cite{WilliamsFull})\\
\hline
16& $\omega<2.3728640$ & this paper (Section \ref{sub:mixed})\\
\hline
32& $\omega<2.3728639$ & this paper (Section \ref{sub:mixed})
\tabularnewline\hline
\end{tabular}
\end{center}
\end{table}\vspace{-3mm}

\section{Algebraic Complexity Theory}
This section presents the notions of algebraic complexity needed for this work.
We refer to, e.g.,  \cite{Blaser13, Burgisser+97} for more detailed treatments.
In this paper $\field$ denotes an arbitrary field. 

\subsection{Trilinear forms}
Let $u,v$ and $w$ be three positive integers, and  $U$, $V$ and~$W$ be three vector spaces over
$\field$ of dimension $u$, $v$ and $w$, respectively.
A trilinear form (also called a tensor)~$t$ on $(U,V,W)$ is an element in $U\otimes V\otimes W\cong\field^{u\times v\times w}$,
where $\otimes$ denotes the tensor product.
If we fix bases $\{x_i\}$, $\{y_j\}$ and $\{z_k\}$ of $U$, $V$ and $W$, respectively,
then $t$ can be written as
$$
t=\sum_{i,j,k}t_{ijk}\:x_i\otimes y_j\otimes z_k
$$
for coefficients $t_{ijk}$ in $\field$. 
We will usually write $x_i\otimes y_j\otimes z_j$ simply as $x_iy_jz_k$. 

Matrix multiplication of an $m\times n$ matrix with entries in~$\field$ by an $n\times p$ matrix
with entries in~$\field$
corresponds to the trilinear form on $(\field^{m\times n},\field^{n\times p},\field^{m\times p})$ with coefficients 
$t_{ijk}=1$ if $i=(r,s)$, $j=(s,t)$ and $k=(r,t)$ for some integers $(r,s,t)\in \{1,\ldots,m\}\times \{1,\ldots,n\}\times \{1,\ldots,p\}$, and $t_{ijk}=0$ otherwise. Indeed, this form can be rewritten as
\[
\sum_{r=1}^m\sum_{t=1}^p\left(\sum_{s=1}^n x_{(r,s)}y_{(s,t)}\right)z_{(r,t)}.
\]
Then, replacing the $x$-variables by the entries of the first matrix and the $y$-variables by the entries of the second matrix, the coefficient of $z_{(r,t)}$ in the above expression represents the entry in the $r$-th row and the
$t$-th column of the matrix product of these two matrices.
This trilinear form will be denoted by $\braket{m,n,p}$.

Another important example is the form $\sum_{\ell=1}^n x_\ell y_\ell z_\ell$. 
This trilinear form on $(\field^n,\field^n,\field^n)$ is denoted~$\braket{n}$ and corresponds to $n$ independent scalar products.

Given a tensor $t\in U\otimes V\otimes W$, it will be convenient to 
denote by $t_{\mathsf C}$ and $t_{\mathsf C^2}$ the tensors in $V\otimes W\otimes U$
and $W\otimes U\otimes V$, respectively, obtained by permuting cyclicly the coordinates
of $t$:
\[
t_{\mathsf C}=\sum_{ijk}t_{ijk}\:y_j\otimes z_k\otimes x_i,\hspace{5mm}t_{\mathsf C^2}=\sum_{ijk}t_{ijk}\:z_k\otimes x_i\otimes y_j.
\]

Given two tensors 
$t\in U\otimes V\otimes W$ and $t'\in U'\otimes V'\otimes W'$,
we can naturally define their direct sum $t\oplus t'$, which is a tensor in 
$(U\oplus U')\otimes (V\oplus V')\otimes (W\oplus W')$, and their tensor product 
$t\otimes t'$, which is a tensor in 
$(U\otimes U')\otimes (V\otimes V')\otimes (W\otimes W')$. 
For any integer $c\ge 1$, the tensor $t\oplus\cdots\oplus t$ (with~$c$ occurrences of $t$) 
will be denoted
by $c\cdot t$ and 
the tensor $t\otimes\cdots \otimes t$ (with $c$ occurrences of $t$)
will be denoted by $t^{\otimes c}$.

Let $\lambda$ be an indeterminate and consider the extension
$\field[\lambda]$ of $\field$, i.e., the set of all polynomials over $\field$ in $\lambda$. 
Let $t\in \field^{u\times v\times w}$ and $t'\in \field^{u'\times v'\times w'}$ be two tensors.
We say that~$t'$ is a degeneration of $t$, denoted $t'\degen t$,  if there exist three matrices
$
\alpha\in \field[\lambda]^{u'\times u},\:\:\beta\in \field[\lambda]^{v'\times v},\:\:\gamma\in \field[\lambda]^{w'\times w}
$
such that 
\[
\lambda^st'+\lambda^{s+1} t''=\sum_{ijk}t_{ijk}\:\alpha(x_i)\otimes \beta(y_j)\otimes \gamma(z_j)
\]
for some tensor $t''\in \field[\lambda]^{u'\times v'\times w'}$ and some nonnegative integer $s$.
Intuitively, the fact that a tensor $t'$ is a degeneration of a tensor $t$ means that an 
algorithm computing~$t$ can be converted into another algorithm computing $t'$ 
with essentially the same complexity.
The notion of degeneration can be used to define the notion of border rank of a tensor $t$, 
denoted $\underline{R}(t)$, as follows:
\[
\underline{R}(t)=\min\{{r\in\Nat\:|\: t\degen\braket{r}\}}.
\]
The border rank is submultiplicative: $\underline{R}(t\otimes t')\le \underline{R}(t)\times \underline{R}(t')$
for any two tensors $t$ and $t'$.

\subsection{The exponent of matrix multiplication}
%
The following theorem, which was proven by Sch\"onhage~\cite{Schonhage81}, 
shows that good upper bounds on $\omega$ can be obtained by finding a trilinear form 
of small border rank that can be degenerated into a direct sum of several large matrix multiplications.  

\begin{theorem}\label{theorem_schonhage}
Let $e$ and $m$ be two positive integers.
Let~$t$ be a tensor such that $e\cdot \braket{m,m,m}\unlhd t$.
Then 
$
em^{\omega}\le \underline{R}(t)
$.
\end{theorem}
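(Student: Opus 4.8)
The plan is to convert the hypothesis $e\cdot\braket{m,m,m}\unlhd t$ into a good bilinear algorithm for one large \emph{square} matrix multiplication, and then read the bound on $\omega$ off from the definition of the exponent. Write $r=\underline R(t)$. The first ingredient is the elementary identity expressing a direct sum of copies of a tensor as a tensor product, $e\cdot\braket{m,m,m}=\braket{e}\otimes\braket{m,m,m}$, together with the fact that the tensor product of two matrix multiplication tensors is again one, so that $(e\cdot\braket{m,m,m})^{\otimes k}=\braket{e^k}\otimes\braket{m^k,m^k,m^k}$ for every $k\ge 1$. Using the compatibility of degeneration with tensor products, and the submultiplicativity of border rank (which also gives $\underline R(t^{\otimes k})\le r^k$), this reduces everything to understanding how well the tensor $\braket{e^k}$ of $e^k$ independent scalar products degenerates into a large square matrix multiplication.

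Here the key fact I would invoke is that the exponent $\omega$ is governed equally well by border rank as by rank (Bini's theorem; this is part of the standard material in the references cited above). Concretely this gives two things: first, $\underline R(\braket{M,M,M})\ge M^{\omega}$ for every $M$; second, for every $\omega'>\omega$ there is an integer $M_0\ge 2$ with $\underline R(\braket{M_0,M_0,M_0})<M_0^{\omega'}$, hence by submultiplicativity $\underline R(\braket{a,a,a})<a^{\omega'}$ for every $a$ of the form $a=M_0^{j}$ ($j\ge 1$). Since, by the very definition of border rank, $\braket{a,a,a}\unlhd\braket{\underline R(\braket{a,a,a})}\unlhd\braket{N}$ whenever $N\ge\underline R(\braket{a,a,a})$, it follows that $\braket{a,a,a}\unlhd\braket{N}$ for all such $a$ and all $N\ge a^{\omega'}$.

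The argument is then assembled as follows. Fix $\omega'>\omega$. For each $k$, pick $a=a_k$ of the form $M_0^{j}$ as large as possible subject to $a^{\omega'}\le e^{k}$; then $a_k\to\infty$ and $a_k^{1/k}\to e^{1/\omega'}$ as $k\to\infty$. By the previous paragraph $\braket{a_k,a_k,a_k}\unlhd\braket{e^{k}}$, so tensoring with $\braket{m^k,m^k,m^k}$ gives
\[
\braket{a_k m^k,\,a_k m^k,\,a_k m^k}=\braket{a_k,a_k,a_k}\otimes\braket{m^k,m^k,m^k}\ \unlhd\ \braket{e^{k}}\otimes\braket{m^k,m^k,m^k},
\]
and the right-hand side equals $(e\cdot\braket{m,m,m})^{\otimes k}$, which degenerates into $t^{\otimes k}$. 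Since degeneration does not increase border rank,
\[
(a_k m^k)^{\omega}\le\underline R\!\left(\braket{a_k m^k,\,a_k m^k,\,a_k m^k}\right)\le\underline R(t^{\otimes k})\le r^{k}.
\]
Taking $k$-th roots and letting $k\to\infty$ yields $e^{\omega/\omega'}m^{\omega}\le r$, and letting $\omega'\downarrow\omega$ gives $e\,m^{\omega}\le r=\underline R(t)$, as desired.

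The only non-elementary step — and the one I would expect to be the main obstacle in a fully self-contained write-up — is the border-rank characterization of $\omega$ used above, i.e.\ Bini's theorem that an approximate bilinear algorithm can be turned into an exact one at the cost of only a subpolynomial overhead; everything else is bookkeeping with degenerations, tensor products, and a limit. A more robust but longer alternative would be to reprove the relevant case of Schönhage's asymptotic sum inequality directly: symmetrize $t$ by tensoring with its cyclic shifts $t_{\mathsf C},t_{\mathsf C^2}$ to reduce to cubic matrix multiplications, pass to a high tensor power, restrict the resulting direct sum to a single combinatorial type to recover many \emph{identical} matrix multiplications, and then combine those copies; but for the symmetric special case stated here the shorter route above already suffices.
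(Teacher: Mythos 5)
The paper does not prove this theorem; it cites it directly from Sch\"onhage~\cite{Schonhage81}, so there is no in-paper argument to compare against. Your proof is correct, and it is a genuinely different (and shorter) route than the standard proof of Sch\"onhage's asymptotic sum inequality: because all $e$ summands here are copies of the \emph{same} square matrix multiplication, you can write $e\cdot\braket{m,m,m}=\braket{e}\otimes\braket{m,m,m}$, degenerate $\braket{e^k}$ into a nearly-optimal square matrix multiplication $\braket{a_k,a_k,a_k}$ with $a_k\approx e^{k/\omega'}$, and let $\omega'\downarrow\omega$. This avoids the type-partition/pigeonhole argument on a high tensor power that the general heterogeneous statement $\bigoplus_i\braket{m_i,n_i,p_i}\unlhd t$ requires (and which you correctly flag as the longer alternative). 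Two small remarks worth making explicit in a full write-up: (1) the lower bound $M^{\omega}\le\underline R(\braket{M,M,M})$ that you invoke is formally the $e=1$ case of the very statement being proved, so to avoid the appearance of circularity you should derive it separately from Bini's theorem together with submultiplicativity of border rank and the circuit definition of $\omega$ (this derivation is standard and does work); (2) the assertion $a_k\to\infty$ presumes $e\ge 2$ — when $e=1$ one has $a_k=1$ for all $k$, but the limit $a_k^{1/k}\to e^{1/\omega'}$ still holds trivially and the conclusion degenerates to the (already established) $e=1$ case, so nothing breaks. Modulo making the reliance on Bini's theorem precise, the argument is sound.
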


Our results will require a generalization of Theorem \ref{theorem_schonhage},
based on the concept of \emph{value} of a tensor. Our presentation of this concept
follows \cite{Davie+13}.
Given a tensor $t\in \field^{u\times v\times w}$ and a positive integer $N$,
define the set 
\begin{equation}\label{eq:value}
\left\{(e,m)\in \Nat\times \Nat\:|\:e\!\cdot\! \braket{m,m,m}\unlhd (t\otimes t_{\mathsf C}\otimes t_{\mathsf C^2})^{\otimes N}\right\}
\end{equation}
corresponding to all pairs $(e,m)$ such that the tensor $(t\otimes t_{\mathsf C}\otimes t_{\mathsf C^2})^{\otimes N}$ can be degenerated
into a direct sum of $e$ tensors, each isomorphic to $\braket{m,m,m}$. Note that this set is finite.
For any real number $\rho\in[2,3]$,
define 
\[
V_{\rho,N}(t)=\max\{(em^\rho)^{\frac{1}{3N}}\},
\]
where the maximum is over all $(e,m)$ in the set of Eq.~(\ref{eq:value}).
We now give the formal definition of the value of a tensor.
\begin{definition}
For any tensor $t$ and any $\rho\in[2,3]$, 
\[
V_{\rho}(t)=\lim_{N\to\infty}V_{\rho,N}(t).
\]
\end{definition}
The limit in this definition is well defined, see \cite{Davie+13}.
Obviously, $V_\rho(t_{\mathsf C})=V_\rho(t_{\mathsf C^2})=V_\rho(t)$ for any tensor $t$,
and $V_\rho(t)\ge V_\rho(t')$ for any tensors $t,t'$ such that 
$t'\degen t$. 
By definition, for any positive integers $m,n$ and $p$ we have
\[
V_\rho(\braket{m,n,p})\ge (mnp)^{\rho/3}.
\]
Moreover,
the value is superadditive and supermultiplicative: for any two tensors
$t$ and $t'$, and any $\rho\in[2,3]$, the inequalities
$V_\rho(t\oplus t')\ge V_\rho(t)+V_\tau(t')$ and
$V_\rho(t\otimes t')\ge V_\rho(t)\times V_\tau(t')$
hold.
With this concept of value, we can state the following slight 
generalization of Theorem \ref{theorem_schonhage}, which 
was used implicitly in \cite{Coppersmith+90} and stated 
explicitly in \cite{Davie+13,WilliamsSTOC12}. 
\begin{theorem}\label{th:value}
Let $t$ be a tensor and $\rho$ be a real number 
such that $2\le \rho\le 3$. 
If
$
V_\rho(t)\ge \underline{R}(t),
$
then $\omega\le \rho$.
\end{theorem}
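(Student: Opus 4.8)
The plan is to reduce the statement to Sch\"onhage's Theorem~\ref{theorem_schonhage}, applied not to $t$ itself but to the tensor $S:=t\otimes t_{\mathsf C}\otimes t_{\mathsf C^2}$ and its tensor powers --- exactly the object that appears in the definition of the value. Writing $R:=\underline R(t)$, I would first record two elementary facts about border rank: it is submultiplicative, and it is unchanged by a cyclic permutation of the three tensor factors, so that $\underline R(t_{\mathsf C})=\underline R(t_{\mathsf C^2})=R$ (the same ``obvious'' reasoning that gives $V_\rho(t_{\mathsf C})=V_\rho(t)$). Together these give $\underline R(S)\le R^3$, hence $\underline R(S^{\otimes N})\le R^{3N}$ for every $N\ge 1$.

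Next, fix $N$ and any pair $(e,m)$ in the set of Eq.~(\ref{eq:value}), so $e\cdot\braket{m,m,m}\unlhd S^{\otimes N}$. Applying Theorem~\ref{theorem_schonhage} to $S^{\otimes N}$ gives $em^\omega\le\underline R(S^{\otimes N})\le R^{3N}$, that is, $(em^\omega)^{1/3N}\le R$, uniformly in $N$ and in the pair. On the other hand, $V_\rho(t)\ge R$ means that for every $\varepsilon>0$ there are $N$ and a pair with $(em^\rho)^{1/3N}\ge R-\varepsilon$. Dividing, $m^{\omega-\rho}=(em^\omega)/(em^\rho)\le\bigl(R/(R-\varepsilon)\bigr)^{3N}$; if we can arrange this for a pair with $m\ge 2$ and with $N$ large compared to $\varepsilon$, then $m^{\omega-\rho}$ is essentially at most $1$, which is impossible when $\omega>\rho$. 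Hence $\omega\le\rho$. (Phrased the other way: a single pair with $m\ge2$ and $em^\rho\ge\underline R(S^{\otimes N})$ already forces $\omega\le\rho$, since $em^\omega\le\underline R(S^{\otimes N})\le em^\rho$ yields $m^{\omega-\rho}\le1$.)

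The delicate point --- and the step I expect to be the real obstacle --- is making this rigorous, for two related reasons. First, $V_\rho(t)$ is only a \emph{limit}, so one never gets the exact inequality $em^\rho\ge\underline R(S^{\otimes N})$, only approximations, and one has to control how the error $\varepsilon$ is allowed to depend on $N$ (passing to large $N$, and if necessary to tensor powers of the witnessing degeneration, is the natural device). Second, one must discard the trivial contributions $m=1$: a degeneration of $S^{\otimes N}$ into $e$ independent scalar products only certifies $e\le\underline R(S^{\otimes N})$ and carries no information about $\omega$, so the argument must produce witnesses with genuine matrix multiplications $\braket{m,m,m}$, $m\ge 2$ --- which is exactly where Sch\"onhage's asymptotic sum inequality does its work. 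A convenient way to organize the endgame, provided the value is witnessed by nontrivial matrix multiplications, is to prove the strict form ``$V_{\rho'}(t)>\underline R(t)\Rightarrow\omega<\rho'$'' first, apply it to every $\rho'\in(\rho,3]$ using monotonicity of $\rho\mapsto V_\rho(t)$, and then let $\rho'\downarrow\rho$.
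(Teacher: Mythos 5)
The paper does not prove this theorem; it is quoted as a known result from~\cite{Davie+13,WilliamsSTOC12}, so there is no internal proof to compare against. Your reduction to Sch\"onhage's Theorem~\ref{theorem_schonhage} applied to $(t\otimes t_{\mathsf C}\otimes t_{\mathsf C^2})^{\otimes N}$, together with submultiplicativity and cyclic invariance of border rank, is the right one, and it cleanly proves the \emph{strict} version: if $V_\rho(t)>\underline R(t)=:R$, pick $N$ and $(e,m)$ with $em^\rho>R^{3N}\ge\underline R\big((t\otimes t_{\mathsf C}\otimes t_{\mathsf C^2})^{\otimes N}\big)\ge em^\omega$; then $m^\rho>m^\omega$ forces both $m\ge2$ and $\omega<\rho$. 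Note that the $m=1$ case eliminates itself here (it would give $e>R^{3N}\ge e$), so the worry you raise about having to discard trivial witnesses needs no separate argument.

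The real gap is in your endgame. From $V_\rho(t)\ge R$ you want $V_{\rho'}(t)>R$ for each $\rho'>\rho$, but monotonicity of $\rho\mapsto V_{\rho}(t)$ is only non-strict, so this step requires precisely the extra hypothesis you parenthetically flag (``provided the value is witnessed by nontrivial matrix multiplications''). Without it, the non-strict statement actually fails at the boundary: for $t=\braket{n}$ one has $V_\rho(\braket{n})=n=\underline R(\braket{n})$ for all $\rho\le\omega$, since $(t\otimes t_{\mathsf C}\otimes t_{\mathsf C^2})^{\otimes N}=\braket{n^{3N}}$, Sch\"onhage gives $em^\rho\le em^\omega\le n^{3N}$ for any admissible $(e,m)$ when $\rho\le\omega$, and $(e,m)=(n^{3N},1)$ attains equality; taking $\rho=2$ the theorem as literally written would then yield $\omega\le2$. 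In the paper the theorem is only ever invoked with strict inequality $V_\rho(t)>\underline R(t)$, and your core argument proves exactly that; I would state and prove the strict version directly rather than attempt to recover the non-strict form through a limiting device that does not close.
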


Finally, we will need the concept of decomposition of a tensor.
Our presentation of this concept
follows \cite{Burgisser+97}.
Let $t\in U\otimes V\otimes W$ be a tensor. Suppose that the vector spaces $U$, $V$ and $W$
decompose as 
\[
U=\bigoplus_{i\in I} U_i,\:\:V=\bigoplus_{j\in J} V_j,\:\:W=\bigoplus_{k\in K} V_k,
\]
where $I,J$ and $W$ are three finite subsets of $\Int$. Let us call this decomposition $D$.
We say that $D$ is a decomposition of $t$ if the tensor $t$ can be written as
\[
t=\sum_{(i,j,k)\in I\times J\times K} t(i,j,k),
\]
where each $t(i,j,k)$ is a tensor in $U_i\otimes V_j\otimes W_k$ (the sum 
does not need to be direct). The support of $t$ with respect to $D$
is defined as 
\[
\supp(t)=\{(i,j,k)\in I\times J\times K\:|\:t(i,j,k)\neq 0\},
\]
and the nonzero $t(i,j,k)$'s are called the components of $t$.

\section{Preliminaries and Notations}\label{sec:prelim}
In this section $S$ denotes a finite subset of $\Int\times \Int\times \Int$.

Let $\alpha_1,\alpha_2,\alpha_3\colon S\to \Int$ be the three coordinate
functions of~$S$, which means that $\alpha_\ell(\myvec{s})=s_\ell$ 
 for each $\ell\in\{1,2,3\}$ and all $\myvec{s}=(s_1,s_2,s_3)\in S$.
We first define the concept of tightness. The same notion was used in \cite{Burgisser+97}.
\begin{definition}
The set $S$ is tight if there exists an integer $d$ such that 
$\alpha_1(\myvec{s})+\alpha_2(\myvec{s})+\alpha_3(\myvec{s})=d$ 
for all $\myvec{s}\in S$. 
The set $S$ is $b$-tight, where $b$ is a positive integer,
if additionally
$\alpha_\ell(S)\subseteq \{0,1,\ldots,b-1\}$ for all $\ell\in\{1,2,3\}$.
\end{definition}
Note that if $S$ is $b$-tight then $|S|\le b^2$.

We denote by $\F(S)$ the set of all real-valued functions on~$S$,
and by $\D(S)$ the set of all probability distributions on~$S$
(i.e., the set of all functions $f\in \F(S)$ such that $f(\myvec{s})\ge 0$ for 
each $\myvec{s}\in S$ and $\sum_{\myvec{s}\in S}f(\myvec{s})=1$).
Note that, with pointwise addition and scalar multiplication, 
$\F(S)$ forms a real vector space of dimension $|S|$.
Given any function $f\in \F(S)$,
we denote by $f_1\colon\alpha_1(S)\to \Real$, $f_2\colon\alpha_2(S)\to \Real$ and $f_3\colon\alpha_3(S)\to \Real$
the three marginal functions of $f$: for each $\ell\in\{1,2,3\}$ and each $a\in \alpha_\ell(S)$,
\begin{align*}
f_\ell(a)&=\sum_{\myvec{s}\in \alpha_\ell^{-1}(a)}f(\myvec{s}).
\end{align*}

Let $\mathbb{S}_S$ denote the group of all permutations on $S$. 
Given any function $f\in \F(S)$ and any $\sigma\in \mathbb{S}_S$, we will denote by
$f^{\sigma}$ the function in $\F(S)$ such that
$f^{\sigma}(\myvec{s})=f(\sigma(\myvec{s}))$ for all $\myvec{s}\in S$. We now define 
the concept of invariance of a function.
\begin{definition}
Let $G$ be a subgroup of $\mathbb{S}_S$.
A function $f\in \F(S)$ 
is $G$-invariant if $f^\sigma=f$ for all $\sigma\in G$. 
\end{definition}
We will denote by 
$\F(S,G)$ the set of all $G$-invariant real-valued functions on $S$,
and by $\D(S,G)=\D(S)\cap \F(S,G)$
the set of all $G$-invariant probability distributions on~$S$.
We denote by $\F_0(S,G)$ the vector space of
all functions $f\in \F(S,G)$ such that $f_\ell(a)=0$ for all $\ell\in\{1,2,3\}$ and all $a\in \alpha_\ell(S)$,
and write 
\[
\chi(S,G)=\dim(\F_0(S,G)).
\]
We call this number $\chi(S,G)$ the compatibility degree of $S$ with respect to $G$.

In our applications  it will be sometimes more convenient to characterize the invariance
in term of a subgroup of permutations of the three coordinates of $S$, rather than 
in term of a subgroup of permutations on $S$, as follows.
Let $L$ be a subgroup of $\mathbb{S}_3$, the group of permutations over
$\{1,2,3\}$. We say that $S$
is $L$-symmetric if $(s_{\sigma(1)},s_{\sigma(2)},s_{\sigma(3)})\in S$ for all 
$(s_1,s_2,s_3)\in S$ and all $\sigma\in L$.
If $S$ is $L$-symmetric, the subgroup $L$
induces the subgroup 
$
L_S=\{\pi_\sigma\:|\:\sigma\in L\}
$
of $\mathbb{S}_S$, where $\pi_\sigma$ denotes the permutation in $\mathbb{S}_S$
such that 
$
\pi_\sigma(s_1,s_2,s_3)=(s_{\sigma(1)},s_{\sigma(2)},s_{\sigma(3)})
$ 
for 
all $(s_1,s_2,s_3)\in S$. 
We will slightly abuse notation and, when $S$ is $L$-symmetric, simply write 
$\F(S,L)$, $\D(S,L)$, $\F_0(S,L)$, $\chi(S,L)$ to represent  
$\F(S,L_S)$, $\D(S,L_S)$, $\F_0(S,L_S)$ and $\chi(S,L_S)$, respectively.

The entropy of a probability distribution $P\in \D(S)$ is 
\[
H(P)=-\sum_{\myvec{s}\in S}P(\myvec{s})\log(P(\myvec{s})),
\]
with the usual convention $0\times \log(0)=0$.
Using the above notations, 
$P_1$, $P_2$ and $P_3$ represent the three marginal
probability distributions of $P$.
For each $\ell\in\{1,2,3\}$, the entropy of  
$P_\ell$ is 
\[
H(P_\ell)=-\sum_{a\in \alpha_\ell(S)}P_\ell(a)\log(P_\ell(a)).
\]
It will sometimes be more convenient to represent, for each $\ell\in\{1,2,3\}$, 
the distribution $P_\ell$ 
as a vector $\myvec{P_\ell}\in\Real^{|\alpha_{\ell}(S)|}$, 
by fixing an arbitrary ordering of the elements in $\alpha_\ell(S)$.

We now define the concept of compatibility of two probability distributions.
\begin{definition}
Two probability distributions $P$ and $Q$ in $\D(S)$ are
compatible if $P_\ell=Q_\ell$ for each $\ell\in\{1,2,3\}$. 
\end{definition}
Finally, for any $P\in \D(S)$, 
we define the quantity
\[
\Gamma_S(P)=\max_{Q}\left[H(Q)\right] - H(P),
\]
where the maximum is over all $Q\in \D(S)$ compatible with~$P$.
Note that $\Gamma_S(P)$ is always non-negative.

\section{General Theory}\label{sec:main}
In this section we describe how to analyze the value of a trilinear form that has a decomposition
with tight support.
\subsection{Derivation of lower bounds on the value}
Our main tool to analyze a trilinear form that has a decomposition
with tight support
is the following theorem, which shows how to reduce the computation of a lower bound on its 
value to solving an optimization problem.
Its proof is given in the appendix.
\begin{theorem}\label{th:main}
Let $t$ be a trilinear form, and $D$ be a decomposition of $t$
with tight support $S\subseteq \Int\times \Int\times \Int$ and component
set $\{t(\myvec{s})\}_{\myvec{s}\in S}$.
Then, for any $P\in\D(S)$ and any $\rho\in[2,3]$,
\[
\log(V_\rho(t))\ge \sum_{\ell=1}^3\frac{H(P_\ell)}{3}+\sum_{\myvec{s}\in S} P(\myvec{s})\log (V_\rho(t(\myvec{s})))-\Gamma_S(P).
\]
\end{theorem}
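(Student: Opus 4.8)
The plan is to estimate $V_{\rho,N}(t)$ for large $N$ by exhibiting, inside the big tensor $T^{\otimes N}$ where $T = t\otimes t_{\mathsf C}\otimes t_{\mathsf C^2}$, a large direct sum of matrix multiplication tensors, and then use the supermultiplicativity of the value of the components together with the combinatorial structure of the tight support $S$. First I would pass from $t$ to the decomposition: $T^{\otimes N}$ inherits a decomposition whose support sits inside $(S\times S\times S)^N$ (with the cyclic shifts permuting the three factors of $T$), and whose components are tensor products of $N$ factors of the form $t(\myvec{s})\otimes t(\myvec{s}')_{\mathsf C}\otimes t(\myvec{s}'')_{\mathsf C^2}$. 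The key point is that tightness of $S$ lets us restrict attention to a single ``type'': fix the distribution $P\in\D(S)$ and consider only those indices $(\myvec{s}^{(1)},\dots,\myvec{s}^{(N)})\in S^N$ in which each $\myvec{s}\in S$ appears exactly $P(\myvec{s})N$ times (assume $P$ is rational with denominator dividing $N$; handle general $P$ by a limiting argument at the end, exploiting continuity of the right-hand side in $P$). There are roughly $2^{NH(P)}$ such sequences, by the standard entropy count of the multinomial coefficient.

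Next I would invoke the Coppersmith--Winograd ``salad'' / Salem--Spencer argument, which is exactly where tightness is used: because $\alpha_1+\alpha_2+\alpha_3$ is constant on $S$, one can apply a random linear-algebraic ``hashing'' (a random choice of substitution on the index sets modulo a large prime, combined with a Behrend-type progression-free set of density subpolynomial in $N$) to zero out all but a $2^{-o(N)}$ fraction of the off-diagonal cross terms, keeping a subfamily of the chosen sequences such that the corresponding components become ``independent'' — i.e., the relevant block of $T^{\otimes N}$ degenerates into a \emph{direct sum} over this subfamily. Each surviving sequence contributes a component which is, up to the cyclic permutation bookkeeping on $T$, the tensor product $\bigotimes_{\myvec{s}\in S}\bigl(t(\myvec{s})\otimes t(\myvec{s})_{\mathsf C}\otimes t(\myvec{s})_{\mathsf C^2}\bigr)^{\otimes P(\myvec{s})N}$, whose value is at least $\prod_{\myvec{s}\in S} V_\rho(t(\myvec{s}))^{3P(\myvec{s})N}$ by supermultiplicativity of the value and the fact that $V_\rho(t_{\mathsf C})=V_\rho(t)$. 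Inside each such component one further degenerates into matrix multiplications using the definition of $V_\rho$ of the factors, so that the whole surviving piece degenerates into $e\cdot\braket{m,m,m}$ with $(em^\rho)^{1/(3N)}$ at least (subfamily size)$^{1/(3N)}$ times $\prod_{\myvec{s}}V_\rho(t(\myvec{s}))^{P(\myvec{s})}$.

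The remaining ingredient is to get the factor $2^{NH(P_1)/3}\cdot 2^{NH(P_2)/3}\cdot 2^{NH(P_3)/3}\cdot 2^{-N\Gamma_S(P)}$ rather than merely $2^{NH(P)/3}$: this comes from splitting the diagonal-block tensor into a three-fold tensor product (the three cyclic copies of $t$ live on disjoint index coordinates) and, on each coordinate, replacing the crude bound by the finer one in which one sums over \emph{all} sequences with the prescribed one-dimensional marginals $P_\ell$, not just those with the full joint type $P$; the maximum-entropy distribution $Q$ compatible with $P$ governs how many index sequences share a given marginal pattern, which is precisely what the term $\Gamma_S(P)=\max_Q H(Q)-H(P)$ measures. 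Taking $1/(3N)$-th powers, letting $N\to\infty$, and using $\lim_N V_{\rho,N}(t)=V_\rho(t)$ together with continuity in $P$ to remove the rationality assumption, gives the claimed inequality. I expect the main obstacle to be making the zeroing-out / direct-sum step rigorous — i.e., verifying that the Salem--Spencer construction, adapted to the joint support $S$ and its three marginals simultaneously, indeed kills all unwanted cross terms while losing only a subexponential factor, and that the bookkeeping of the cyclic permutations on $t\otimes t_{\mathsf C}\otimes t_{\mathsf C^2}$ is consistent throughout; the entropy counting and the appeal to supermultiplicativity are routine by comparison.
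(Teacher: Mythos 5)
Your outline follows essentially the same route as the paper's proof (Appendix, Sections A.1--A.3): prune $t^{\otimes N}$ to marginal type classes, pass to $\mset{t}\otimes\mset{t}_{\mathsf C}\otimes\mset{t}_{\mathsf C^2}$, invoke the Coppersmith--Winograd Salem--Spencer lemma to extract a direct sum indexed by a set $\Delta$ of triples all having joint type $P$, lower-bound the value of each surviving block via supermultiplicativity, and then do the entropy/Stirling accounting, removing the rationality assumption at the end by continuity. The main ingredients and their roles are identified correctly, and you rightly flag the zeroing-out/Salem--Spencer step as the part needing the most care.

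One conceptual point is stated a bit backwards and is worth straightening out, since it is exactly where the term $-\Gamma_S(P)$ enters. You begin by saying you will ``consider only those indices in which each $\myvec{s}$ appears exactly $P(\myvec{s})N$ times,'' i.e.\ restrict to joint type $P$. But degeneration only lets you zero $x$-, $y$-, and $z$-variables separately, so the pruning can only enforce the three \emph{marginal} types $P_1,P_2,P_3$; after pruning, every sequence whose joint type $Q$ is compatible with $P$ survives. The restriction to joint type $P$ happens only inside the Salem--Spencer argument, and you must pay for the full fan-out $\Nn$ (the number of surviving triples sharing one fixed coordinate), which is dominated, up to polynomial factors, by the maximum-entropy compatible $Q$. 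So it is not that ``summing over all compatible $Q$'' is a finer bound one may choose to use; it is an unavoidable cost of the coordinatewise degeneration, and the paper's estimate $|\Delta|\gtrsim T\Nn^\ast/\Nn^{1+\epsilon}$, with $T\approx 2^{N\sum_\ell H(P_\ell)}$, $\Nn^\ast\approx 2^{3NH(P)-N\sum_\ell H(P_\ell)}$, and $\Nn\approx 2^{3N\max_Q H(Q)-N\sum_\ell H(P_\ell)}$, is precisely what produces the combination $\sum_\ell H(P_\ell)/3-\Gamma_S(P)$ after taking the $1/(3N)$-th power. Your later sentence about the max-entropy $Q$ controlling how many index sequences share a given marginal pattern does name the right mechanism, so this is an issue of exposition rather than a missing idea.
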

This theorem can be seen as a generalized statement of the approach developed by Coppersmith
and Winograd~\cite{Coppersmith+90}.
Several similar statements already appeared in the literature. 
A weaker statement, corresponding to the simpler case where each component
is isomorphic to a matrix product (which removes the need for the term $-\Gamma_S(P)$ in the lower bound), 
can be found in \cite{Burgisser+97}. 
The generalization to the case of arbitrary components stated in Theorem~\ref{th:main}
was considered in~\cite{Stothers10}, and proved implicitly, 
by considering several cases (symmetric and asymmetric supports) and 
without reference to the entropy, 
in \cite{Davie+13,LeGallFOCS12,WilliamsSTOC12}.
Theorem~\ref{th:main} aims at providing a concise statement unifying all these results, 
described in terms of entropy in order to discuss the convexity of the lower bounds obtained.

\subsection{Solving the optimization problem}\label{sub:solving}
Let $t$ be a trilinear form with a decomposition that has a tight support, 
as in the statement of Theorem~\ref{th:main}.
It will be convenient to define, for any $\rho\in[2,3]$, 
the function $\Psi_{t,\rho}\colon \D(S)\to\Real$ as
\[
\Psi_{t,\rho}(P)=
\sum_{\ell=1}^3\frac{H(P_\ell)}{3}+\sum_{\myvec{s}\in S} P(\myvec{s})\log (V_\rho(t(\myvec{s})))
\]
for any $P\in \D(S)$. Note that this is a concave function on the convex set $\D(S)$.
In order to optimize the upper bound on $\log (V_\rho(t))$ that is obtained from Theorem \ref{th:main},
we would like to find, for a 
given value of~$\rho$, a probability distribution $P\in \D(S)$ that 
minimizes the expression
\[
\Gamma_S(P)-\Psi_{t,\rho}(P).
\]
This optimization problem is in general not convex, due to the presence of the term $\Gamma_S(P)$.
In this subsection we develop a method to overcome this difficulty and find, using Theorem \ref{th:main}, a lower bound on $V_\rho(t)$
in polynomial time.

Remember that 
$\Gamma_S(P)=
\max_{
Q
}
[H(Q)]
-H(P),
$
where the maximum is over all $Q\in \D(S)$ that are compatible with~$P$.
When $P$ is fixed, these conditions on $Q$ can be written as linear constraints.
Since the entropy is a strictly concave function, 
computing $-\Gamma_S(P)$ is then a strictly convex optimization problem on a convex set, and in particular has a unique solution $\hat Q$.
Note that 
$
\Gamma_S(\hat Q)=
H(\hat Q)
-H(\hat Q)=0,
$
and thus
$\Psi_{t,\rho}(\hat Q)$
is a lower bound on $\log (V_\rho(t))$. The tightness of this lower bound of course depends on the initial choice 
of~$P$. A natural choice is to take a probability distribution~$P$ that maximizes $\Psi_{t,\rho}(P)$, since finding 
such a probability distribution corresponds to solving a convex optimization problem.
This motivates the algorithm described in Figure~\ref{fig:algorithm}, which we call Algorithm $\mathcal{A}$.

\begin{figure}[ht!]
\begin{center}
\fbox{
\begin{minipage}{11.7 cm}
\underline{\textbf{Algorithm $\mathcal{A}$}}\vspace{2mm}

Input: $\bullet$ the support $S\subseteq \Int\times\Int\times\Int$ of the tensor $t$

$\phantom{Input: }$$\bullet$ a value $\rho\in[2,3]$

$\phantom{Input: }$$\bullet$ the values $V_\rho(t(\myvec{s}))$ for each $\myvec{s}\in S$\vspace{2mm}

1. Solve the following convex optimization problem.
\begin{equation*}
\left.
\begin{aligned}
&\text{minimize } -\Psi_{t,\rho}(P)\hspace{17mm}\\
&\text{subject to } P\in \D(S)
\end{aligned}
\hspace{1mm}\right\}
\text{OPT1}
\end{equation*}
\vspace{1mm}

2. Solve the following convex optimization problem, where $\hat P$ denotes  

\hspace{3mm} 
the solution found at Step 1.\vspace{2mm}
\begin{equation*}
\left.
\begin{aligned}
\hspace{0mm}&\text{minimize } -H(Q)\\
\hspace{0mm}&\text{subject to } Q\in \D(S)\\
\hspace{0mm}& \hspace{15mm}Q \text{ compatible with }\hat P
\end{aligned}
\hspace{1mm}\right\}
\text{OPT2}
\end{equation*}
\vspace{1mm}

3. Output $\Psi_{t,\rho}(\hat Q)$,
where $\hat Q$ denotes the solution 
found at Step 2.
\end{minipage}
}
\end{center}\vspace{-4mm}
\caption{Algorithm $\mathcal{A}$ computing, given a tensor $t$ with a decomposition that has a tight support and a value $\rho\in[2,3]$, 
a lower bound on $\log(V_\rho(t))$.}\label{fig:algorithm}
\end{figure}

As already mentioned, the optimization problem OPT 2 has a unique solution.
While the solution of the optimization problem OPT1 may not be unique,
it can actually be shown, using the strict concavity of the entropy function, that two solutions 
of OPT1 must have the same marginal probability distributions. 
Since the domain of the optimization problem OPT2 depends only on the marginal 
distributions of $\hat P$, the output of Algorithm $\mathcal{A}$ does not depend on which solution~$\hat P$ 
was found at Step 1. 
This output is thus unique and,
from Theorem \ref{th:main} and the discussion above,
it gives a lower bound on $\log(V_\rho(t))$. We state 
this conclusion 
in the following theorem.
\begin{theorem}\label{th:alg}
If the support of $t$ is tight, then
Algorithm~$\mathcal{A}$ outputs a lower bound on $\log(V_\rho(t))$. 
\end{theorem}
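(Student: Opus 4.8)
The plan is to assemble Theorem~\ref{th:alg} from the two ingredients already laid out in the text: the lower bound on $\log(V_\rho(t))$ from Theorem~\ref{th:main}, and the observation that the distribution $\hat Q$ produced at Step~2 satisfies $\Gamma_S(\hat Q)=0$. Concretely, I would first recall that, since $\hat Q$ is by construction compatible with $\hat P$, its marginals satisfy $(\hat Q)_\ell = (\hat P)_\ell$ for each $\ell\in\{1,2,3\}$, and $\hat Q$ maximizes $H(Q)$ over all $Q\in\D(S)$ compatible with $\hat P$. Hence the maximum in the definition $\Gamma_S(\hat Q)=\max_{Q}[H(Q)]-H(\hat Q)$, taken over all $Q$ compatible with $\hat Q$ — which is the same set as the $Q$ compatible with $\hat P$, since compatibility depends only on marginals — is attained at $\hat Q$ itself, so $\Gamma_S(\hat Q)=H(\hat Q)-H(\hat Q)=0$.

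Next I would apply Theorem~\ref{th:main} with the probability distribution $P=\hat Q$. Since the support $S$ of $t$ is tight by hypothesis, the theorem gives
\[
\log(V_\rho(t)) \ge \sum_{\ell=1}^3 \frac{H((\hat Q)_\ell)}{3} + \sum_{\myvec{s}\in S}\hat Q(\myvec{s})\log(V_\rho(t(\myvec{s}))) - \Gamma_S(\hat Q).
\]
By the previous paragraph $\Gamma_S(\hat Q)=0$, and the first two terms on the right-hand side are exactly $\Psi_{t,\rho}(\hat Q)$ by definition of $\Psi_{t,\rho}$. Therefore $\log(V_\rho(t))\ge \Psi_{t,\rho}(\hat Q)$, which is precisely the quantity output by Algorithm~$\mathcal{A}$ at Step~3.

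Finally, I would address well-definedness: the statement claims Algorithm~$\mathcal{A}$ \emph{outputs} a lower bound, so I should note that the output does not depend on the non-unique choice of $\hat P$ at Step~1. Here I would invoke the fact, argued in the surrounding discussion via strict concavity of the entropy, that any two minimizers of OPT1 share the same marginal distributions; since OPT2's feasible region depends on $\hat P$ only through its marginals, and OPT2 has a unique solution (strict concavity of $H$ on the compact convex feasible set), the value $\Psi_{t,\rho}(\hat Q)$ is independent of which $\hat P$ was selected. Combined with the displayed inequality, this completes the proof.

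The only genuinely delicate point is the identification $\Gamma_S(\hat Q)=0$, which hinges on recognizing that ``$Q$ compatible with $\hat Q$'' and ``$Q$ compatible with $\hat P$'' describe the same constraint set (both reduce to matching the common marginals), so that $\hat Q$ is itself the entropy-maximizer among distributions compatible with $\hat Q$; everything else is bookkeeping. I would make sure this equivalence is stated explicitly, since it is what lets the $-\Gamma_S(P)$ penalty term in Theorem~\ref{th:main} vanish and turns the generally non-convex optimization into the two convex problems OPT1 and OPT2.
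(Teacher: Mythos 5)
Your proposal is correct and matches the paper's argument: the paper likewise observes that compatibility depends only on marginals, so $\Gamma_S(\hat Q)=H(\hat Q)-H(\hat Q)=0$, then invokes Theorem~\ref{th:main} with $P=\hat Q$, and justifies well-definedness via the strict concavity of the entropy exactly as you do. You merely spell out more explicitly the step identifying ``compatible with $\hat Q$'' with ``compatible with $\hat P$,'' which the paper leaves implicit.
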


Let us now discuss the time complexity of implementing the algorithm of
Figure \ref{fig:algorithm}.
The worst-case running time depends on the time needed to solve 
the two optimization problems OPT1 and OPT2 at Steps 1 and 2. 
Let 
$v=\Psi_{t,\rho}(\hat Q)$
denote the output of an exact implementation of Algorithm~$\mathcal{A}$.
Theorem \ref{th:alg} shows that $v\le \log(V_\rho(t))$.
Since both OPT1 and OPT2 
are convex, and since the number of variables 
is upper bounded by $|S|$, for any $\varepsilon>0$
both problems can be solved with accuracy $\varepsilon$ in time 
$\poly(|S|,\log (1/\varepsilon))$ using standard methods~\cite{Ben-Tal+01,Fang+01}. 
Thus, for any $\varepsilon'>0$, we can compute in time 
$\poly(|S|,\log (1/\varepsilon'))$ a value $v'$ such that 
$
|v-v'|\le \varepsilon'\cdot v.
$
In particular, we can use $\frac{v'}{1+\varepsilon'}$ as a lower bound on $\log(V_\rho(t))$.

We finally explain how to exploit symmetries of the decomposition of $t$
to reduce the number of variables in Algorithm~$\mathcal{A}$.
These observations will enable us to slightly simplify the exposition of our results 
in the next sections. We first define invariance of a decomposition of a tensor.
\begin{definition}
Let $t$ be a tensor that has a decomposition $D$
with support $S$ and components $\{t(\myvec{s})\}_{\myvec{s}\in S}$.
The decomposition $D$ is $G$-invariant if 
$\Psi_{t,\rho}(P^\sigma)=\Psi_{t,\rho}(P)$ for any $P\in \D(S)$ and any $\sigma\in G$.
\end{definition}
With a slight abuse of language we will say, given a subgroup~$L$ of $\mathbb{S}_3$, 
that $D$ is $L$-invariant if $S$ is $L$-symmetric and $D$ is $L_S$-invariant (see Section \ref{sec:prelim}
for the definition of $L_S$).

Assume that the decomposition $D$ of the 
tensor $t$ on which we want to apply Algorithm $\mathcal{A}$ 
is $G$-invariant, where $G$ is a subgroup of $\mathbb{S}_{S}$. 
Consider the optimization problem OPT1. 
Since the value of its objective function is then
unchanged under the action of any permutation $\sigma\in G$ on $P$, 
OPT1 has a solution that is $G$-invariant (see, e.g., \cite{Boyd+04} for a discussion of symmetries in convex optimization).
Now, if $\hat P$ is $G$-invariant, then the (unique) solution of the optimization 
problem OPT2 is $G$-invariant as well, since the value of the function $-H(Q)$ is
unchanged under the action of any permutation on~$Q$.  
This means that, if the decomposition~$D$ is $G$-invariant, then $\D(S)$ can be replaced by $\D(S,G)$ at both
Steps~1 and~2 of Algorithm~$\mathcal{A}$.
Note that this set of distributions can be parametrized by 
$\dim(\F(S,G))$ parameters,
instead of $|S|$ parameters.

\subsection{Another approach}

In this subsection we describe another approach to obtain lower
bounds on $V_\rho(t)$ using Theorem \ref{th:main}, which is  
essentially how the powers of the construction
by Coppersmith and Winograd were studied in previous works
\cite{Davie+13,LeGallFOCS12,Stothers10,WilliamsSTOC12}.
Given any subgroup $G$ of $\mathbb{S}_S$, let us consider the
vector space $\F_0(S,G)$ of dimension $\chi(S,G)$ defined in Section \ref{sec:prelim}.
It will be convenient to represent functions in this vector space by vectors in
$\Real^{|S|}$, by fixing an arbitrary ordering of the elements in $S$.
Let $\myvec{R}$ be a generating matrix of size $|S|\times \chi(S,G)$ for $\F_0(S,G)$
(i.e., the columns of~$\myvec{R}$ form a basis of $\F_0(S,G)$).
Since each coordinate of $\Real^{|S|}$ corresponds to an element of~$S$, we 
write $R_{\myvec{s}j}$, for $\myvec{s}\in S$ and $j\in\{1,\ldots,\chi(S,G)\}$, 
to represent the element in the $\myvec{s}$-th
row and the $j$-th column of $\myvec{R}$.
The approach is based on the following proposition, 
which is similar to 
a characterization given in~\cite{WilliamsSTOC12}.

\begin{proposition}\label{lemma:kernel}
For any $P,P'\in \D(S,G)$ that are compatible,
the equality $\Gamma_S(P')=H(P)-H(P')$ holds if~$P$ satisfies the following two conditions:
\begin{itemize}
\item[(i)]
$P(\myvec{s})>0$ for any $s\in S$ such that~$\myvec{R}$ contains at least one non-zero entry
in its row labeled by $\myvec{s}$,
\item[(ii)]
$\sum_{\myvec{s}\in S}R_{\myvec{s}j}\log(P(\myvec{s}))=0$ for all  $j\in\{1,\ldots,\chi(S,G)\}$.
\end{itemize}
\end{proposition}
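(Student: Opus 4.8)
The plan is to characterize $\Gamma_S(P')$ via the standard convex-duality / maximum-entropy argument, and then verify that the hypotheses (i)--(ii) make $P$ itself the maximizer. Recall that $\Gamma_S(P') = \max_Q H(Q) - H(P')$, where the maximum is over all $Q\in\D(S)$ with $Q_\ell = P'_\ell$ for $\ell\in\{1,2,3\}$. Since $P$ and $P'$ are compatible, $P$ is feasible for this maximization, so immediately $\Gamma_S(P')\ge H(P)-H(P')$. The whole content is therefore the reverse inequality: I must show that $P$ actually attains the maximum of $H(Q)$ over this feasible set.

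First I would reduce the maximization to the subset $S^+=\{\myvec{s}\in S: \myvec{R}\text{ has a nonzero entry in row }\myvec{s}\}$ together with its complement. The key linear-algebra fact is that the feasible set — all $Q\in\D(S)$ with the prescribed three marginals — is exactly $\{P' + f : f\in\F_0(S)\}$ intersected with the nonnegativity constraints; restricting to $G$-invariant $Q$ (which is legitimate by the symmetry argument already used for OPT2 in Section~\ref{sub:solving}, since $-H$ is permutation-invariant and the marginal constraints are $G$-stable) replaces $\F_0(S)$ by $\F_0(S,G)$, whose columns are spanned by $\myvec{R}$. So every admissible perturbation direction is of the form $f=\myvec{R}\myvec{c}$ for $\myvec{c}\in\Real^{\chi(S,G)}$. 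The standard characterization of the entropy maximizer is that $P$ is optimal iff the directional derivative of $H$ at $P$ vanishes along every feasible direction, i.e. $\sum_{\myvec{s}} f(\myvec{s})\bigl(-\log P(\myvec{s}) - 1\bigr) = 0$ for all $f\in\F_0(S,G)$ — here condition (i) guarantees $\log P(\myvec{s})$ is finite wherever $f$ can be nonzero, so the derivative is well-defined. Since $\sum_{\myvec{s}} f(\myvec{s}) = 0$ for $f\in\F_0(S,G)$ (marginals are zero, a fortiori the total mass), the $-1$ term drops, and the condition becomes $\sum_{\myvec{s}} f(\myvec{s})\log P(\myvec{s}) = 0$ for all $f\in\F_0(S,G)$; taking $f$ to range over the columns of $\myvec{R}$, this is precisely condition (ii). Concavity of $H$ then upgrades this first-order stationarity to global optimality of $P$ over the feasible convex set, giving $\max_Q H(Q) = H(P)$ and hence $\Gamma_S(P') = H(P) - H(P')$.

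The main obstacle is handling the boundary of the simplex carefully: the feasible $Q$ may have zero entries, $H$ is not differentiable there, and the generic "set the gradient to zero" argument only applies in the relative interior. The clean way around this is to avoid gradients on $\D(S)$ altogether and instead use the concavity inequality $H(Q) \le H(P) + \sum_{\myvec{s}\in\supp P}(Q(\myvec{s})-P(\myvec{s}))\bigl(-\log P(\myvec{s})-1\bigr)$, valid for all $Q$ in the simplex (with the convention $0\log 0 = 0$), which is just the tangent-line bound for the concave function $H$ restricted to the affine feasible set. Writing $Q - P = Q - P' + (P' - P)$, note $Q-P'\in\F_0(S,G)$ and $P'-P\in\F_0(S,G)$, so $Q-P\in\F_0(S,G)$; moreover $Q-P$ is supported on $S^+$ in the sense needed, because on $S\setminus S^+$ no element of $\F_0(S,G)$ is nonzero — that is exactly why $\myvec{R}$ has zero rows there — so those terms contribute nothing and the possibly-infinite $\log P(\myvec{s})$ on $S\setminus S^+$ never appears. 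On $S^+$, condition (i) makes $\log P(\myvec{s})$ finite, the sum $\sum(Q(\myvec{s})-P(\myvec{s}))(-1)$ vanishes since both are probability distributions, and $\sum(Q(\myvec{s})-P(\myvec{s}))(-\log P(\myvec{s})) = 0$ by linearity and condition (ii) applied to the vector $Q-P$ expressed in the column span of $\myvec{R}$. Hence $H(Q)\le H(P)$ for every feasible $Q$, completing the argument.
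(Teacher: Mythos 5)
Your proof is correct and follows the same route as the paper's: parametrize the distributions compatible with $P'$ by perturbations $\myvec{R}\myvec{u}$ in $\F_0(S,G)$, use condition~(i) to ensure that $P$ is an interior point of this parametrized domain where $H$ is differentiable, observe that condition~(ii) is exactly first-order stationarity of the entropy there (after the $-1$ term cancels because $\F_0(S,G)$ sums to zero), and invoke concavity to promote the stationary point to a global maximum. Your tangent-line formulation $H(Q)\le H(P)+\sum(Q(\myvec{s})-P(\myvec{s}))(-\log P(\myvec{s})-1)$ is the integrated version of the paper's gradient computation and handles the boundary of the simplex slightly more explicitly, but the underlying argument is identical.
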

\begin{proof}
We first make the following observation:
for any vector $\myvec{x}\in \F_0(S,G)\subseteq \Real^{|S|}$, 
the equality
\[
\sum_{\myvec{s}\in S} x_{\myvec{s}}=0
\]
holds.
This implies in particular that $\sum_{\myvec{s}\in S}R_{\myvec{s}j}=0$ for any $j\in\{1,\ldots,\chi(S,G)\}$.

Let $\myvec{P'}$ denote the vector in $\Real^{|S|}$ representing the 
probability distribution $P'$. We have
\begin{align*}
\Gamma_S(P')&=
\max_{\myvec{u}\in C}
[H(\myvec{P'}+\myvec{R}\myvec{u})]
-H(\myvec{P'}),
\end{align*}
where
$
C=
\{
\myvec{u}\in \Real^{\chi(S,G)} \:|\: \myvec{P'}+\myvec{R}\myvec{u}\in \D(S)
\}.
$
Note that $C$ is a convex set.
Consider the function $h\colon C\to \Real$ defined as 
$h(\myvec{u})=H(\myvec{P'}+\myvec{R}\myvec{u})$ for any $\myvec{u}\in C$.
This is a concave function, differentiable on the interior of $C$. 
Let us take an interior point $\myvec{u}$ and write  $\myvec{z}=\myvec{P'}+\myvec{R}\myvec{u}$.
The partial derivatives at $\myvec{u}$ are
\[
\frac{\partial h}{\partial u_j}=-\sum_{\myvec{s}\in S}R_{\myvec{s}j}(1+\log(z_\myvec{s}))=-\sum_{\myvec{s}\in S}R_{\myvec{s}j}\log(z_\myvec{s}),
\]
for each $j\in\{1,\ldots,\chi(S,G)\}$.

A probability distribution $P$ that satisfies the conditions in the statement of the proposition
therefore corresponds to a vanishing point (and thus a global maximum) of the function~$h$, which implies that
the equality
$\Gamma_S(P')=H(P)-H(P')$ holds for this distribution $P$, as claimed. 
\end{proof}

In particular, applying Proposition \ref{lemma:kernel} with $P'=P$ shows that, if Conditions (i) and~(ii) are satisfied, 
then $\Gamma_S(P)=0$, which implies $\log(V_\rho(t))\ge \Psi_{t,\rho}(P)$ from Theorem \ref{th:main}.
This motivates the algorithm described in Figure \ref{fig:algorithmB} that outputs a lower bound on $\log(V_\rho(t))$.
We will call it Algorithm $\mathcal{B}$. 

\begin{figure}[ht!]
\begin{center}
\fbox{
\begin{minipage}{11.7 cm} 
\underline{\textbf{Algorithm $\mathcal{B}$}}\vspace{2mm}

Input: $\bullet$ the support $S\subseteq \Int\times\Int\times\Int$ for the tensor $t$

$\phantom{Input: }$$\bullet$ a value $\rho\in[2,3]$

$\phantom{Input: }$$\bullet$ the values $V_\rho(t(\myvec{s}))$ for each $\myvec{s}\in S$

$\phantom{Input: }$$\bullet$ a subgroup $G$ of $\mathbb{S}_S$ such that the 
decomposition of $t$ is $G$-invariant 
\vspace{2mm}

1. Solve the following optimization problem.
\begin{equation*}
\begin{aligned}
\hspace{3mm}&\text{minimize } -\Psi_{t,\rho}(P)\hspace{17mm}\\
\hspace{3mm}&\text{subject to } P\in \D(S,G)\\
\hspace{3mm}&\phantom{subject to } P \text{ satisfies Conditions~(i)-(ii) of Proposition }\ref{lemma:kernel}
\end{aligned}
\hspace{1mm}
\end{equation*}
\vspace{1mm}

2. Output $\Psi_{t,\rho}(\tilde P)$,
where $\tilde P$ denotes the solution 
found at Step 1.
\end{minipage}
}
\end{center}\vspace{-4mm}
\caption{Algorithm $\mathcal{B}$ computing, given a tensor $t$ with a decomposition that has a tight support and a value $\rho\in[2,3]$, 
a lower bound on $\log(V_\rho(t))$.}\label{fig:algorithmB}
\end{figure}

Note that, when $\chi(S,G)=0$, Algorithms $\mathcal{A}$ and $\mathcal{B}$ solve
exactly the same optimization problem 
(since in Algorithm~$\mathcal{B}$ Conditions (i) and (ii) are satisfied for any $P\in D(S,G)$,
and $\hat Q=\hat P$ in Algorithm~$\mathcal{A}$) and thus output the same value.
When $\chi(S,G)>0$ Algorithm $\mathcal{B}$ usually gives better lower bounds than Algorithm $\mathcal{A}$,
but at the price of introducing $\chi(S,G)$ highly nonconvex constraints, which makes the optimization problem 
much harder to solve, both in theory and in practice, even for a modest number of variables.

\section{Powers of Tensors}\label{section:powers}
Let $t$ and $t'$ be two trilinear forms with decompositions $D$ and $D'$, respectively. 
Let $\supp(t)\subset \Int\times\Int\times\Int$ 
and $\supp(t')\subset \Int\times\Int\times\Int$ denote their supports, 
and $\{t(\myvec{s})\}_{\myvec{s}\in\supp(t)}$ and $\{t'(\myvec{s'})\}_{\myvec{s'}\in\supp(t')}$ denote their component sets.
Assume that both supports are tight.
Fix $\rho\in[2,3]$ and assume that lower bounds on the values $V_\rho(t(\myvec{s}))$ and $V_\rho(t'(\myvec{s'}))$ are
known for each $\myvec{s}\in \supp(t)$ and each $\myvec{s'}\in \supp(t')$.
In this section we describe a method, inspired by \cite{Coppersmith+90} and \cite{Stothers10},
and also used in \cite{WilliamsSTOC12},
to analyze $V_\rho(t\otimes t')$, and then show how to use it
to analyze
$V_\rho(t^{\otimes m})$ when $m$ is a 
power of two.

In this section we will denote $\alpha_1,\alpha_2,\alpha_3\colon \Int\times\Int\times\Int\to \Int$ the 
three coordinate functions of $\Int\times\Int\times\Int$.

Consider the tensor 
\[
t \otimes t'=\sum_{\myvec{s}\in \supp(t)}\sum_{\myvec{s'}\in \supp(t')} t(\myvec{s})\otimes t'(\myvec{s'}).
\]
Consider the following decomposition of $t\otimes t'$: the support is
\[
\supp(t\otimes t')=
\left\{
(
\alpha_1(\myvec{s})+\alpha_1(\myvec{s'}),
\alpha_2(\myvec{s})+\alpha_2(\myvec{s'}),
\alpha_3(\myvec{s})+\alpha_3(\myvec{s'})
)
\:|\:
\myvec{s}\in \supp(t),\myvec{s'}\in \supp(t')
\right\}
\]
and,
for each $(a,b,c)\in \supp(t\otimes t')$
the associated component is 
\[
(t\otimes t')(a,b,c)=\sum t(\myvec{s})\otimes t'(\myvec{s'}),
\]
where the sum is over all 
$(\myvec{s},\myvec{s'})\in \supp(t)\times \supp(t')$
such that $\alpha_1(\myvec{s})+\alpha_1(\myvec{s'})=a$,
$\alpha_2(\myvec{s})+\alpha_2(\myvec{s'})=b$ and
$\alpha_3(\myvec{s})+\alpha_3(\myvec{s'})=c$.
Note that the support of this decomposition is tight.
If
lower bounds on the value of each component 
are known,
then we can use this decomposition to obtain 
a lower bound on $V_\rho(t\otimes t')$, by using
Algorithm $\mathcal{A}$ on $t\otimes t'$,
which requires solving 
two convex optimization problems,
 each having $|\supp(t\otimes t')|$ variables.

We now explain how to evaluate the value of those
components $(t\otimes t')(a,b,c)$.
For any $(a,b,c)\in \supp(t\otimes t')$, consider the following
decomposition of $(t\otimes t')(a,b,c)$:
the support is 
\[
\left\{
\myvec{s}\in\supp(t)\:|\:(a-\alpha_1(\myvec{s}),b-\alpha_2(\myvec{s}),c-\alpha_3(\myvec{s}))\in\supp(t')
\right\}
\]
and, for each element $\myvec{s}$ in this set, the corresponding component is 
\[
t(\myvec{s})\otimes t'(a-\alpha_1(\myvec{s}),b-\alpha_2(\myvec{s}),c-\alpha_3(\myvec{s})).
\]
Note that the support in this decomposition is tight, and has size at most $|\supp(t)|$.
The value of each component 
can be lower bounded as
\begin{align*}
V_\rho&\left( t(\myvec{s})\otimes t'(a-\alpha_1(\myvec{s}),b-\alpha_2(\myvec{s}),c-\alpha_3(\myvec{s}))\right)\ge\\
&\hspace{4mm}V_\rho(t(\myvec{s}))\times V_\rho(t'(a-\alpha_1(\myvec{s}),b-\alpha_2(\myvec{s}),c-\alpha_3(\myvec{s})),
\end{align*}
from the supermultiplicativity of the value. As we supposed that the lower bounds on the values of each component of $t$ and~$t'$ are known, we can use
Algorithm $\mathcal{A}$ on each $(t\otimes t')(a,b,c)$ 
to obtain a lower bound on $V_\rho((t\otimes t')(a,b,c))$, which requires solving 
two convex optimization problems, each having at most $|\supp(t)|$ variables.

Let us now consider the case $t'=t$.
We have just shown the following result:
a lower bound on $V_\rho(t^{\otimes 2})$
can be computed by solving two convex optimization problems with
$|\supp(t^{\otimes 2})|$ variables, 
and $2|\supp(t^{\otimes 2})|$ convex optimization
problems with at most $|\supp(t)|$ variables.  
An important point is that this method additionally gives,
as described in the previous paragraphs,  
a decomposition of $t^{\otimes 2}$ with tight support, and
a lower bound on $V_\rho(t^{\otimes 2}(a,b,c))$ 
for each component $t^{\otimes 2}(a,b,c)$.
This information
can then be used to analyze the trilinear form
$t^{\otimes 4}=t^{\otimes 2}\otimes t^{\otimes 2}$, 
by replacing $t$ by $t^{\otimes 2}$ in the above analysis,
giving a decomposition 
of $t^{\otimes 4}$ with tight support, a lower bound on $V_\rho(t^{\otimes 4})$
and a lower bound on the value $V_\rho(t^{\otimes 4}(a,b,c))$ of each component.
By iterating this approach $r$ times, for any $r\ge 1$,
we can analyze the trilinear form $t^{\otimes 2^r}$, 
and in particular obtain
a lower bound on $V_\rho(t^{\otimes 2^r})$. 
Let us denote by~$D^{2^r}$
the decomposition of $t^{\otimes 2^r}$ obtained by this approach.
Its support is
\[
\supp(t^{\otimes 2^{r}})=
\left\{
(
\alpha_1(\myvec{s})+\alpha_1(\myvec{s'}),
\alpha_2(\myvec{s})+\alpha_2(\myvec{s'}),
\alpha_3(\myvec{s})+\alpha_3(\myvec{s'})
)\:|\:
\myvec{s},\myvec{s'}\in \supp(t^{\otimes 2^{r-1}})
\right\}
\]
and,
for any $(a,b,c)\in \supp(t^{\otimes 2^r})$, 
the corresponding component is
\[
t^{\otimes 2^r}(a,b,c)=\sum t^{\otimes 2^{r-1}}(\myvec{s})\otimes t^{\otimes 2^{r-1}}(\myvec{s'}),
\]
where the sum is over all 
$(\myvec{s},\myvec{s'})\in \supp(t^{\otimes 2^{r-1}})$
such that $\alpha_1(\myvec{s})+\alpha_1(\myvec{s'})=a$,
$\alpha_2(\myvec{s})+\alpha_2(\myvec{s'})=b$ and
$\alpha_3(\myvec{s})+\alpha_3(\myvec{s'})=c$.
This approach also gives a decomposition 
$D_{abc}^{2^r}$  
of each component 
$t^{\otimes 2^r}(a,b,c)$. 
In this decomposition the support, which we denote $S^{2^r}_{abc}$, is
\[
S^{2^r}_{abc}=
\left\{
\myvec{s}\in\supp(t^{\otimes 2^{r-1}})
\:|\:
(a-\alpha_1(\myvec{s}),b-\alpha_2(\myvec{s}),c-\alpha_3(\myvec{s}))\in\supp(t^{\otimes 2^{r-1}})
\right\}
\]
and, for any $\myvec{s}\in S^{2^r}_{abc}$, the corresponding component of $t^{\otimes 2^r}(a,b,c)$ is 
\[
t^{\otimes 2^{r-1}}(\myvec{s})\otimes t^{\otimes 2^{r-1}}(a-\alpha_1(\myvec{s}),b-\alpha_2(\myvec{s}),c-\alpha_3(\myvec{s})).
\]

The 
overall number of convex optimization problems that need to be solved in order 
to analyze $t^{\otimes 2^r}$ by the above approach is upper bounded by
$r(2+2|\supp(t^{\otimes 2^{r}})|)$, while the number of variables in each optimization problem is 
upper bounded by $|\supp(t^{\otimes 2^{r}})|$.
In the case where $\supp(t)$ is $b$-tight we can give a simple upper bound on this quantity.
Indeed, when $\supp(t)$ is $b$-tight, our construction 
guarantees that 
$\supp(t^{\otimes 2^{r}})$ is $(b2^r)$-tight, which implies that 
$
|\supp(t^{\otimes 2^{r}})|\le (b2^r)^2.
$
We thus obtain the following result.
\begin{theorem}
Let $t$ be a trilinear form that 
has a decomposition with $b$-tight support $\supp(t)$
and components $\{t(\myvec{s})\}$.
Fix $\rho\in[2,3]$ and assume that a lower bound on the value $V_\rho(t(\myvec{s}))$ is
known for each $\myvec{s}\in\supp(t)$. 
Then, for any integer $r\ge 1$, 
a lower bound on $V_\rho(t^{\otimes 2^r})$
can be computed by solving $\poly(b,2^r)$ 
convex optimizations problems, each
optimization problem having
$\poly(b,2^r)$ variables. 
\end{theorem}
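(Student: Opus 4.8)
The plan is to proceed by induction on $r$, formalizing the recursive construction described above, so that the theorem becomes essentially a bookkeeping statement on top of Theorems~\ref{th:main} and~\ref{th:alg}. The induction hypothesis for a given $r\ge 0$ is: the decomposition $D^{2^r}$ of $t^{\otimes 2^r}$ has $(b2^r)$-tight support $\supp(t^{\otimes 2^r})$, and for every component $t^{\otimes 2^r}(a,b,c)$ a lower bound on $\log V_\rho(t^{\otimes 2^r}(a,b,c))$ has been computed, at a total cost of $N_r$ convex optimization problems, each having at most $(b2^r)^2$ variables. The base case $r=0$ is exactly the hypothesis of the theorem: take $D^1=D$, whose components are the $t(\myvec{s})$ and whose value lower bounds are the given ones, with $N_0=0$.

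For the inductive step I would assume the statement for $r-1$ and build $D^{2^r}$ from $D^{2^{r-1}}$ precisely as in the displayed construction, writing $t^{\otimes 2^r}=t^{\otimes 2^{r-1}}\otimes t^{\otimes 2^{r-1}}$. First, for each triple $(a,b,c)\in\supp(t^{\otimes 2^r})$, form the decomposition $D^{2^r}_{abc}$ of the component $t^{\otimes 2^r}(a,b,c)$ with support $S^{2^r}_{abc}$. This support is tight (every $\myvec{s}\in S^{2^r}_{abc}$ lies in $\supp(t^{\otimes 2^{r-1}})$, hence has coordinate sum equal to the common sum $d_{r-1}$ of $\supp(t^{\otimes 2^{r-1}})$) and has size at most $|\supp(t^{\otimes 2^{r-1}})|\le (b2^{r-1})^2$. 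Each component of $D^{2^r}_{abc}$ is a tensor product of two components of $t^{\otimes 2^{r-1}}$, so by supermultiplicativity of the value its value is bounded below by the product of the two inductively known lower bounds. Thus Algorithm~$\mathcal A$ applies to $D^{2^r}_{abc}$ and, by Theorem~\ref{th:alg}, outputs a lower bound on $\log V_\rho(t^{\otimes 2^r}(a,b,c))$ at the cost of two convex optimization problems with at most $(b2^{r-1})^2$ variables. Doing this for every triple, and then running Algorithm~$\mathcal A$ once more on the decomposition $D^{2^r}$ of $t^{\otimes 2^r}$ itself — whose components' value lower bounds we have just produced — yields a lower bound on $\log V_\rho(t^{\otimes 2^r})$, and hence on $V_\rho(t^{\otimes 2^r})$.

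It then remains to verify the two quantitative claims. For tightness: if $\alpha_\ell(\supp(t^{\otimes 2^{r-1}}))\subseteq\{0,\ldots,b2^{r-1}-1\}$ for each $\ell$, then every coordinate of a triple in $\supp(t^{\otimes 2^r})$ is a sum of two such values, so lies in $\{0,\ldots,2b2^{r-1}-2\}\subseteq\{0,\ldots,b2^r-1\}$; and the constant-sum property is preserved, the new common sum being $d_r=2d_{r-1}$. Hence $\supp(t^{\otimes 2^r})$ is $(b2^r)$-tight and $|\supp(t^{\otimes 2^r})|\le(b2^r)^2$. For the count: the passage from $r-1$ to $r$ uses at most $2|\supp(t^{\otimes 2^r})|$ problems for the components plus $2$ for $t^{\otimes 2^r}$ itself, so $N_r\le N_{r-1}+2+2(b2^r)^2$, which telescopes to $N_r=\poly(b,2^r)$; and every problem involved has at most $(b2^r)^2=\poly(b,2^r)$ variables. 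Finally, since the convex problems inside Algorithm~$\mathcal A$ are only solved to accuracy $\varepsilon'$, one replaces each computed value $v'$ by $v'/(1+\varepsilon')$ as in Section~\ref{sub:solving}, which preserves the lower-bound property at the cost of an extra $\poly(\log(1/\varepsilon'))$ factor in the running time; this is harmless for the statement.

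There is no genuinely hard step: the argument is pure accounting once Algorithm~$\mathcal A$ and its correctness (Theorem~\ref{th:alg}) are in hand. The point requiring the most care is that the recursion bottoms out correctly — the value lower bounds needed to invoke Algorithm~$\mathcal A$ at level $r$ are furnished by level $r-1$ and ultimately by the hypothesis at level $0$ — together with checking that $b$-tightness propagates through tensoring, so that the bound $(b2^r)^2$ on the number of variables is uniform over all the (polynomially many) optimization problems generated along the way.
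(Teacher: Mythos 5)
Your proof is correct and follows essentially the same route as the paper's (informal) argument preceding the theorem statement: iterate the pairwise tensoring construction $r$ times, at each level apply Algorithm~$\mathcal{A}$ to every component $D^{2^r}_{abc}$ (two convex problems each, values of sub-components obtained inductively via supermultiplicativity) and then once more to $D^{2^r}$ itself, and control both the number of problems and the number of variables by propagating $b$-tightness: $\supp(t^{\otimes 2^r})$ is $(b2^r)$-tight, hence has at most $(b2^r)^2$ elements. Your recursive bookkeeping $N_r\le N_{r-1}+2+2|\supp(t^{\otimes 2^r})|$ telescopes to the same $\poly(b,2^r)$ count the paper records as $r(2+2|\supp(t^{\otimes 2^r})|)$, and the remark on finite-accuracy solutions correctly defers to the discussion in Section~\ref{sub:solving}.
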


Finally, we present two simple lemmas that show how to exploit the symmetries of the decomposition of $t$ to 
slightly reduce the number of variables in our optimization problems. 
\begin{lemma}\label{lemma:sym3}
For any $r\ge 1$ and any $(a,b,c)\in \supp(t^{\otimes 2^r})$,
the decomposition $D_{abc}^{2^r}$ is
$\{\identi,\pi\}$-invariant, where $\identi$ denotes the identity permutation 
and $\pi$ is the permutation on $S^{2^r}_{abc}$
such that 
\[
\pi(\myvec{s})=(a-\alpha_1(\myvec{s}),b-\alpha_2(\myvec{s}),c-\alpha_3(\myvec{s}))
\]
for all $\myvec{s}\in S^{2^r}_{abc}$.
\end{lemma}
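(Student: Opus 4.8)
The plan is to verify directly that the map $\pi$ described in the statement is a well-defined permutation of $S^{2^r}_{abc}$ of order dividing two, and that it fixes the decomposition $D^{2^r}_{abc}$ in the sense required by the definition of $L_S$-invariance. First I would check that $\pi$ maps $S^{2^r}_{abc}$ into itself. By definition, $\myvec{s}\in S^{2^r}_{abc}$ means $\myvec{s}\in\supp(t^{\otimes 2^{r-1}})$ and $\myvec{s'}:=(a-\alpha_1(\myvec{s}),b-\alpha_2(\myvec{s}),c-\alpha_3(\myvec{s}))\in\supp(t^{\otimes 2^{r-1}})$; since the condition defining membership is symmetric in $\myvec{s}$ and $\myvec{s'}$ (note $a-\alpha_1(\myvec{s'})=\alpha_1(\myvec{s})$, etc.), we get $\pi(\myvec{s})=\myvec{s'}\in S^{2^r}_{abc}$. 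The same computation shows $\pi(\pi(\myvec{s}))=\myvec{s}$, so $\pi$ is an involution, hence a permutation, and $\{\identi,\pi\}$ is a subgroup of $\mathbb{S}_{S^{2^r}_{abc}}$ of order at most two.

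Next I would check the invariance condition. By the definition of $G$-invariance of a decomposition, I must show $\Psi_{t^{\otimes 2^r}(a,b,c),\rho}(P^\pi)=\Psi_{t^{\otimes 2^r}(a,b,c),\rho}(P)$ for every $P\in\D(S^{2^r}_{abc})$. Recall that
\[
\Psi_{t^{\otimes 2^r}(a,b,c),\rho}(P)=\sum_{\ell=1}^3\frac{H(P_\ell)}{3}+\sum_{\myvec{s}\in S^{2^r}_{abc}}P(\myvec{s})\log\bigl(V_\rho(\text{component at }\myvec{s})\bigr).
\]
For the entropy terms I would argue that $\pi$ permutes the fibers $\alpha_\ell^{-1}(\cdot)$ among themselves: applying $\pi$ sends an element with $\ell$-th coordinate $\alpha_\ell(\myvec{s})$ to one with $\ell$-th coordinate $(\text{const})-\alpha_\ell(\myvec{s})$, so the multiset of marginal probabilities of $P^\pi$ is a permutation of that of $P$, whence $H((P^\pi)_\ell)=H(P_\ell)$ for each $\ell$. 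For the second sum I would use that the component of $t^{\otimes 2^r}(a,b,c)$ attached to $\myvec{s}$ is $t^{\otimes 2^{r-1}}(\myvec{s})\otimes t^{\otimes 2^{r-1}}(\pi(\myvec{s}))$, and the component attached to $\pi(\myvec{s})$ is $t^{\otimes 2^{r-1}}(\pi(\myvec{s}))\otimes t^{\otimes 2^{r-1}}(\myvec{s})$; since $x\otimes y$ and $y\otimes x$ are isomorphic tensors they have equal value, so the coefficient $\log(V_\rho(\cdot))$ is $\pi$-invariant. Substituting $P^\pi(\myvec{s})=P(\pi(\myvec{s}))$ and reindexing the sum by $\myvec{s}\mapsto\pi(\myvec{s})$ then gives equality of the second sums as well, and the lemma follows.

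I do not expect any serious obstacle here; the statement is essentially a bookkeeping observation. The one point requiring mild care is the claim that the value of a tensor depends only on its isomorphism class, so that $V_\rho(t^{\otimes 2^{r-1}}(\myvec{s})\otimes t^{\otimes 2^{r-1}}(\pi(\myvec{s})))=V_\rho(t^{\otimes 2^{r-1}}(\pi(\myvec{s}))\otimes t^{\otimes 2^{r-1}}(\myvec{s}))$; this follows from the fact that $V_\rho$ is invariant under isomorphism (indeed under degeneration in both directions), which was already noted after the definition of the value. A second minor point is to make sure the edge case $\myvec{s}=\pi(\myvec{s})$ (a fixed point of $\pi$) causes no trouble, but it plainly does not, since there both sides of every identity above are trivially equal.
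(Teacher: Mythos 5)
Your proof is correct and follows the same strategy as the paper's: verify $\Psi_{t^{\otimes 2^r}(a,b,c),\rho}(P^\pi)=\Psi_{t^{\otimes 2^r}(a,b,c),\rho}(P)$ directly. You are in fact more careful than the paper's one-line argument, which asserts only that $H(P_\ell)=H(P^\pi_\ell)$ and then concludes; your additional step --- observing that the component at $\pi(\myvec{s})$ is the tensor product of the same two factors in the opposite order, hence isomorphic and of equal value --- is exactly what is needed to make the ``which implies'' in the paper's proof rigorous.
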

\begin{proof}
Observe that, for any probability distribution $P\in D(S_{abc}^{2^r})$, the equality
$H(P_\ell)=H(P^\pi_\ell)$ holds for all $\ell\in\{1,2,3\}$, 
 which implies that 
 \[
 \Psi_{t^{\otimes 2^r}(a,b,c),\rho}(P)=\Psi_{t^{\otimes 2^r}(a,b,c),\rho}(P^\pi),
 \]
 as wanted.
\end{proof}
\begin{lemma}\label{lemma:sym2}
Let $L$ be a subgroup of $\mathbb{S}_3$.
Assume that $\supp(t)$ is $L$-symmetric and that
\[
V_\rho(t(s_1,s_2,s_3))=V_\rho(t(s_{\sigma(1)},s_{\sigma(2)},s_{\sigma(3)}))
\]
for any $\sigma\in L$ and any $\myvec{s}=
(s_1,s_2,s_3)\in \supp(t)$.
Then $D$ is $L$-invariant and, for any $r\ge 1$,
the decomposition $D^{2^r}$ is $L$-invariant as well. 
\end{lemma}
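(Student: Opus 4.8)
The plan is to prove Lemma~\ref{lemma:sym2} by first establishing the base case---that $D$ itself is $L$-invariant---and then using it inductively to show that $D^{2^r}$ is $L$-invariant for all $r\ge 1$. For the base case, I would fix $\sigma\in L$ and let $\pi_\sigma\in\mathbb{S}_S$ be the induced permutation, where $S=\supp(t)$. Given $P\in\D(S)$, I must show $\Psi_{t,\rho}(P^{\pi_\sigma})=\Psi_{t,\rho}(P)$. Recall that $\Psi_{t,\rho}(P)=\sum_{\ell=1}^3\frac{H(P_\ell)}{3}+\sum_{\myvec{s}\in S}P(\myvec{s})\log(V_\rho(t(\myvec{s})))$. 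For the entropy term, permuting the coordinates of $S$ by $\sigma$ just permutes the three marginal distributions $P_1,P_2,P_3$ among themselves (specifically $(P^{\pi_\sigma})_\ell = P_{\sigma(\ell)}$, up to the relabeling of $\alpha_\ell(S)$ versus $\alpha_{\sigma(\ell)}(S)$), so the symmetric sum $\sum_\ell H(P_\ell)/3$ is unchanged. For the second term, reindexing the sum via $\myvec{s}\mapsto\pi_\sigma(\myvec{s})$ gives $\sum_{\myvec{s}}P(\pi_\sigma(\myvec{s}))\log(V_\rho(t(\myvec{s})))=\sum_{\myvec{s}}P(\myvec{s})\log(V_\rho(t(\pi_\sigma^{-1}(\myvec{s}))))$, and by the hypothesis $V_\rho(t(s_1,s_2,s_3))=V_\rho(t(s_{\sigma(1)},s_{\sigma(2)},s_{\sigma(3)}))$ this equals the second term of $\Psi_{t,\rho}(P)$. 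Hence $D$ is $L_S$-invariant, i.e., $L$-invariant.

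For the inductive step, suppose $D^{2^{r-1}}$ is $L$-invariant; equivalently, $\supp(t^{\otimes 2^{r-1}})$ is $L$-symmetric and a lower bound function $\widetilde V_\rho$ on the component values satisfies $\widetilde V_\rho(t^{\otimes 2^{r-1}}(s_1,s_2,s_3))=\widetilde V_\rho(t^{\otimes 2^{r-1}}(s_{\sigma(1)},s_{\sigma(2)},s_{\sigma(3)}))$ for all $\sigma\in L$. I would first check that $\supp(t^{\otimes 2^r})$ is $L$-symmetric: by the explicit description of this support as coordinatewise sums of pairs of elements of $\supp(t^{\otimes 2^{r-1}})$, applying $\sigma$ to a triple $(a,b,c)=(\alpha_1(\myvec{s})+\alpha_1(\myvec{s'}),\ldots)$ yields the sum of $\sigma$-permuted versions of $\myvec{s}$ and $\myvec{s'}$, which again lie in $\supp(t^{\otimes 2^{r-1}})$ by the inductive hypothesis. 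Next, I must verify that the lower bound produced by our construction on $V_\rho(t^{\otimes 2^r}(a,b,c))$ is invariant under $\sigma$. This bound is the output of Algorithm $\mathcal{A}$ applied to the decomposition $D_{abc}^{2^r}$, whose support is $S^{2^r}_{abc}$ and whose components have values lower-bounded by $\widetilde V_\rho(t^{\otimes 2^{r-1}}(\myvec{s}))\cdot\widetilde V_\rho(t^{\otimes 2^{r-1}}(a-\alpha_1(\myvec{s}),\ldots))$. The permutation $\sigma$ induces a bijection between $S^{2^r}_{abc}$ and $S^{2^r}_{\sigma(a,b,c)}$ that matches components with equal value estimates (using the inductive invariance of $\widetilde V_\rho$), and since Algorithm $\mathcal{A}$'s output depends only on the support geometry and the multiset of component values, the two outputs coincide. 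Finally, by the same argument as in the base case (the entropy term is symmetric under coordinate permutations, and the component-value term is invariant by what was just shown), $\Psi_{t^{\otimes 2^r},\rho}(P^{\pi_\sigma})=\Psi_{t^{\otimes 2^r},\rho}(P)$ for all $P\in\D(\supp(t^{\otimes 2^r}))$, so $D^{2^r}$ is $L$-invariant.

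The main obstacle I anticipate is the inductive step's claim that the construction's lower bound on $V_\rho(t^{\otimes 2^r}(a,b,c))$ is genuinely $\sigma$-invariant. This is not a property of the true value $V_\rho$ (which is obviously invariant since $t^{\otimes 2^r}(a,b,c)$ and $t^{\otimes 2^r}(\sigma(a,b,c))$ are isomorphic tensors) but of the \emph{particular estimate} produced by our recursive procedure, so one must trace carefully through how Algorithm $\mathcal{A}$ treats the decomposition $D_{abc}^{2^r}$ and confirm that the coordinate permutation $\sigma$ carries the whole data structure---support, component list, and value estimates---of the $(a,b,c)$-problem onto that of the $\sigma(a,b,c)$-problem bijectively. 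Once one pins down that Algorithm $\mathcal{A}$ is a function only of $(S, \{\widetilde V_\rho(\text{component})\}_{\myvec{s}})$ considered up to relabeling of $S$ by permutations that commute with the coordinate structure, the claim follows; making this dependence precise is the bookkeeping-heavy part of the argument. The entropy manipulations and the reindexing of sums are routine.
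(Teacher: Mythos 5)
Your base case is essentially the paper's own argument and is correct: the marginal identity $P_\ell = (P^{\pi_\sigma})_{\sigma(\ell)}$ makes the entropy term of $\Psi_{t,\rho}$ invariant, and the value-equality hypothesis (applied with $\sigma^{-1}\in L$) makes the second term invariant after reindexing. The trouble starts at the inductive step, which rests on a misreading of what $L$-invariance means. The paper's definition is that a decomposition with support $S'$ and components $\{t'(\myvec{s})\}$ is $G$-invariant iff $\Psi_{t',\rho}(P^\sigma)=\Psi_{t',\rho}(P)$ for all $P\in\D(S')$, $\sigma\in G$; and $\Psi_{t',\rho}$ is built from the \emph{actual} values $V_\rho(t'(\myvec{s}))$. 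It makes no reference to Algorithm $\mathcal{A}$ or to any lower-bound function $\widetilde V_\rho$. Your restatement ``$D^{2^{r-1}}$ is $L$-invariant; equivalently, \dots a lower bound function $\widetilde V_\rho$ on the component values satisfies \dots'' is therefore not an equivalence, and the ensuing analysis of how Algorithm $\mathcal{A}$ treats $D^{2^r}_{abc}$ versus $D^{2^r}_{\sigma(a,b,c)}$, and of the ``support geometry and multiset of component values'' the algorithm sees, is not part of what the lemma asserts. The lemma is purely structural; the algorithmic consequences are drawn elsewhere (in the discussion at the end of Section~\ref{sub:solving}).

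Once this is corrected, your inductive step collapses to a repeat of the base case. One needs two facts: (i) $\supp(t^{\otimes 2^r})$ is $L$-symmetric, which you establish correctly from the $L$-symmetry of $\supp(t^{\otimes 2^{r-1}})$ and the relation $\alpha_\ell(\pi_\sigma(\myvec{s}))=\alpha_{\sigma(\ell)}(\myvec{s})$; and (ii) $V_\rho(t^{\otimes 2^r}(\pi_\sigma(\myvec{u})))=V_\rho(t^{\otimes 2^r}(\myvec{u}))$ for the \emph{actual} values. Point (ii) is precisely the thing you parenthetically dismiss as ``obviously invariant since $t^{\otimes 2^r}(a,b,c)$ and $t^{\otimes 2^r}(\sigma(a,b,c))$ are isomorphic tensors.'' Given (i) and (ii), the two-line calculation from your base case ($P_\ell=(P^{\pi_\sigma})_{\sigma(\ell)}$ for the entropy term, (ii) for the value term) gives $\Psi_{t^{\otimes 2^r},\rho}(P^{\pi_\sigma})=\Psi_{t^{\otimes 2^r},\rho}(P)$, which is what the paper means by ``the same argument shows that $D^{2^r}$ is $L$-invariant.'' In short, the ``main obstacle'' you anticipated is an artifact of the misreading; the part of your write-up that you label as routine is the entire proof.
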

\begin{proof}
For any $\sigma\in L$, any probability distribution $P\in\D(\supp(t))$ and any $\ell\in\{1,2,3\}$,
the equality $P_\ell=P_{\sigma(\ell)}^{\pi_\sigma}$ holds,
where $\pi_\sigma$ denotes the permutation 
such that 
\[
\pi_\sigma(s_1,s_2,s_3)=(s_{\sigma(1)},s_{\sigma(2)},s_{\sigma(3)})
\] 
for 
all $(s_1,s_2,s_3)\in \supp(t)$. 
This implies that 
$\Psi_{t,\rho}(P)=\Psi_{t,\rho}(P^{\pi_\sigma})$, and thus $D$ is $L$-invariant.
The same argument shows that $D^{2^r}$ is $L$-invariant for any $r\ge 1$. 
\end{proof}

From the discussion of Section \ref{sub:solving},
Lemma \ref{lemma:sym3} enables us to reduce the number of variables when 
computing the lower bound on 
$V_\rho(t^{\otimes 2^r}(a,b,c))$ using Algorithm $\mathcal{A}$:
instead of solving an optimization problem over 
$\D(\supp(t^{\otimes 2^r}(a,b,c)))$,
we only need to consider  
$\D(\supp(t^{\otimes 2^r}(a,b,c)),\{\identi,\pi\})$.
Similarly, if the conditions of Lemma \ref{lemma:sym2} are satisfied, 
then, instead of considering 
$\D(\supp(t^{\otimes 2^r}))$,
we need only to consider 
$\D(\supp(t^{\otimes 2^r}),L)$
when computing the lower bound on $V_\rho(t^{\otimes 2^r})$ using Algorithm $\mathcal{A}$.
\vspace{2mm}

\noindent\textbf{Remark.}
The approach described in this section can be generalized 
to obtain lower bounds on $V_\rho(t^{\otimes m})$ 
when $m$ is not a power of two. For instance the third power
can be analyzed by studying $t\otimes t'$ with $t'=t^{\otimes 2}$.
Another possible straightforward generalization is to allow other linear dependences in the definition of the support, i.e., 
defining the support of $t\otimes t'$ as
\[
\supp(t\otimes t')=
\left\{
(
\alpha_1(\myvec{s})+u\alpha_1(\myvec{s'}),
\alpha_2(\myvec{s})+u\alpha_2(\myvec{s'}),
\alpha_3(\myvec{s})+u\alpha_3(\myvec{s'})
)
\:|\:
\myvec{s}\in \supp(t),\myvec{s'}\in \supp(t')
\right\}
\]
where $u\in \Int$ can be freely chosen.
These two generalizations nevertheless 
do not seem to lead to any improvement for $\omega$ when applied
to existing constructions.

\section{Application}
In this section we apply the theory
developed in the previous sections 
to the construction $\A$ by Coppersmith and Winograd, 
in order to obtain upper 
bounds on $\omega$.

\subsection{Construction}
Let $\field$ be an arbitrary field.
Let $q$ be a positive integer, and consider three 
vector spaces $U$, $V$ and $W$ of dimension $q+2$ over $\field$. 
Take 
a basis $\{x_0,\ldots,x_{q+1}\}$ of $U$, 
a basis $\{y_0,\ldots,y_{q+1}\}$ of $V$,
and
a basis $\{z_0,\ldots,z_{q+1}\}$ of $W$. 

The trilinear form $\A$ considered by Coppersmith 
and Winograd is the following trilinear form on
$(U,V,W)$:
\begin{align*}
\A=\sum_{i=1}^q (x_0y_iz_i+x_iy_0z_i+&x_iy_iz_0)+x_0y_0z_{q+1}+
x_0y_{q+1}z_{0}+x_{q+1}y_0z_0.
\end{align*}
It was shown in \cite{Coppersmith+90} 
that $\underline R(\A)=q+2$.
Consider the following decomposition of $U$, $V$ and $W$:
\[
U=U_0\oplus U_1\oplus U_2,\:\:\: V=V_0\oplus V_1\oplus V_2,\:\:\: W=W_0\oplus W_1\oplus W_2,
\]
where
$U_0=\myspan{x_0}$,
$U_1=\myspan{x_1,\ldots,x_q}$ and $U_2=\myspan{x_{q+1}}$,
$V_0=\myspan{y_0}$, $V_1=\myspan{y_1,\ldots,y_{q}}$ and $V_2=\myspan{y_{q+1}}$,
$W_0=\myspan{z_0}$, $W_1=\myspan{z_1,\ldots,z_{q}}$ and $W_2=\myspan{z_{q+1}}$.
This decomposition induces a decomposition $D$ of $\A$ with tight support
\[
S=\{(2,0,0),(1,1,0),(1,0,1),(0,2,0),(0,1,1),(0,0,2)\}.
\]
The components associated with $(2,0,0)$ and $(1,1,0)$ are
\begin{align*}
\A(2,0,0)&=x_{q+1}y_0z_0\cong \braket{1,1,1},\\
\A(1,1,0)&=\sum_{i=1}^qx_iy_iz_0\cong \braket{1,q,1}.
\end{align*}
We have $V_\rho(\A(2,0,0))=1$ and $V_\rho(\A(1,1,0))\ge q^{\rho/3}$,
from the definition of the value.
The other components $\A(0,2,0)$ and $\A(0,0,2)$ are obtained 
by permuting the coordinates of $\A(2,0,0)$,
while the components $\A(1,0,1)$ and $\A(0,1,1)$ are obtained 
by permuting the coordinates
of $\A(1,1,0)$.

We now use Theorem \ref{th:main} to obtain an upper bound on~$\omega$.
Let $P$ be a probability distribution in $\D(S)$. Let us write
$P(2,0,0)=a_1$, $P(1,1,0)=a_2$, $P(1,0,1)=a_3$, 
$P(0,2,0)=a_4$, $P(0,1,1)=a_5$ and $P(0,0,2)=a_6$. 
The marginal distributions of $P$ are
$\myvec{P_1}=(a_1,a_2+a_3,a_4+a_5+a_6)$, 
$\myvec{P_2}=(a_4,a_2+a_5,a_1+a_3+a_6)$
and $\myvec{P_3}=(a_6,a_3+a_5,a_1+a_2+a_4)$.
Since the only element in $D(S)$ compatible with~$P$ is 
$P$, we have $\Gamma_S(P)=0$. 
Theorem~\ref{th:main} 
thus implies that  
\[
V_\rho(\A)\ge\exp\left(\frac{H(P_1)+H(P_2)+H(P_3)}{3}\right)\times q^{(a_2+a_3+a_5)\rho/3}
\]
for any $\rho\in[2,3]$.
Evaluating this expression with 
$q=6$, $a_2=a_3=a_5=0.3173$, $a_1=a_4=a_6=(1-3a_2)/3$,
and $\rho=2.38719$ gives $V_\rho(\A)> 8.00000017$.
Using Theorem \ref{th:value} and the fact that $\underline R(\A)=q+2$, 
we conclude that $\omega<2.38719$. This is the same upper bound as the bound
obtained in Section 7 of~\cite{Coppersmith+90}.

\subsection{Analyzing the powers using Algorithm $\boldsymbol{\mathcal{A}}$}
For any $r\ge 1$, we now consider the tensor $\A^{\otimes 2^r}$ and analyze 
it using the framework and the notations of Section \ref{section:powers}.
The support of its decomposition $D^{2^r}$ is the set of all triples
\[
(a,b,c)\in\{0,\ldots,2^{r+1}\}\times \{0,\ldots,2^{r+1}\}\times\{0,\ldots,2^{r+1}\}
\]
such that $a+b+c=2^{r+1}$.
Note that the decomposition~$D$ of~$\A$ satisfies the conditions of Lemma~\ref{lemma:sym2}
for the subgroup $L=\mathbb{S}_3$ of $\mathbb{S}_3$, which implies that 
$D^{2^r}$ is $\mathbb{S}_3$-invariant. Thus, from the discussion in Section~\ref{sub:solving},
when applying Algorithm~$\mathcal{A}$ on the trilinear form
$t^{\otimes 2^r}$ in order to obtain a lower bound on $V_\rho(t^{2^r})$,
we only need to consider probability 
distributions in $\D(\supp(\A^{\otimes 2^r}),\mathbb{S}_3)$.
This set 
can be parametrized by 
$
\dim(\F(\supp(\A^{\otimes 2^r}),\mathbb{S}_3))
$
parameters. Remember that we also need 
a lower bound on the value of each component $\A^{\otimes 2^r}(a,b,c)$
before applying $\mathcal{A}$ on~$t^{\otimes 2^r}$. Using the method
described in Section~\ref{section:powers}, these lower bounds 
are computed recursively 
by applying Algorithm~$\mathcal{A}$ on the decomposition 
$D^{2^r}_{abc}$ of the component. 
Actually, we do not need to apply~$\mathcal{A}$ when $a=0$, $b=0$ or $c=0$,
since a lower bound on the value can be found analytically in this case,
as stated in the following lemma (see Claim 7 in~\cite{WilliamsFull} for a proof).
\begin{lemma}\label{lemma:value-B}
For any $r\ge 0$ and any $b\in\{0,1,\ldots,2^r\}$, 
\[
V_\rho\big(\A^{\otimes 2^r}\!(2^{r+1}-b,b,0)\big)\ge 
\!
\left(
\sum_{\begin{subarray}{c}e\in\{0,\ldots,b\}\\e\equiv b\bmod 2\end{subarray}}
\frac{2^r!}{e!(\frac{b-e}{2})!(2^r-\frac{b+e}{2})!}q^e
\right)^{\!\!\!\frac{\rho}{3}}\!\!\!.
\]
\end{lemma}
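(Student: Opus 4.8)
The plan is to show that the tensor $\A^{\otimes 2^r}(2^{r+1}-b,b,0)$ is, up to relabelling basis vectors, the single matrix-multiplication tensor $\braket{1,N_b,1}$, where $N_b$ is exactly the quantity inside the parentheses in the statement; the bound is then immediate from $V_\rho(\braket{1,N_b,1})\ge N_b^{\rho/3}$ and the invariance of $V_\rho$ under isomorphism.

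First I would unfold the recursive definition of the decomposition $D^{2^r}$ of Section~\ref{section:powers}: by an easy induction on $r$ (using that coordinate addition is associative, so the binary grouping is irrelevant), the component of $\A^{\otimes 2^r}$ indexed by a triple $(a,b,c)$ equals $\sum\A(\myvec{s}^{(1)})\otimes\cdots\otimes\A(\myvec{s}^{(2^r)})$, the sum ranging over all $2^r$-tuples $(\myvec{s}^{(1)},\dots,\myvec{s}^{(2^r)})\in S^{2^r}$ whose coordinatewise sum is $(a,b,c)$. For $(a,b,c)=(2^{r+1}-b,b,0)$ the third coordinates are nonnegative integers summing to $0$, hence every $\myvec{s}^{(i)}$ lies in $\{(2,0,0),(1,1,0),(0,2,0)\}$ and the corresponding component of $\A$ is $x_{q+1}y_0z_0$, $\sum_{j=1}^q x_jy_jz_0$, or $x_0y_{q+1}z_0$ respectively. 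In particular each such term has $z$-part $z_0^{\otimes 2^r}$, so the component lies in $U^{\otimes 2^r}\otimes V^{\otimes 2^r}\otimes (W_0)^{\otimes 2^r}$ with $(W_0)^{\otimes 2^r}$ one-dimensional, and therefore equals $M\otimes z_0^{\otimes 2^r}$ for a bilinear form $M\in U^{\otimes 2^r}\otimes V^{\otimes 2^r}$.

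Next I would record the combinatorics. If a tuple uses the three admissible types with multiplicities $n_1,n_2,n_3$, then $n_1+n_2+n_3=2^r$ and the second-coordinate sum forces $n_2+2n_3=b$ (the first-coordinate sum is then automatic); setting $e=n_2$ this is equivalent to $e\equiv b\pmod 2$, $0\le e\le b$, $n_3=(b-e)/2$, $n_1=2^r-(b+e)/2$, and the number of tuples realizing a given $e$ is the multinomial coefficient $\frac{2^r!}{e!(\frac{b-e}{2})!(2^r-\frac{b+e}{2})!}$. Every such tuple contributes to $M$ the tensor product of $e$ copies of the $q\times q$ identity $\sum_{j=1}^q x_j\otimes y_j$ with $n_1+n_3$ rank-one factors, i.e. a copy of the identity matrix of rank $q^e$ supported on some coordinate subspace of $U^{\otimes 2^r}\otimes V^{\otimes 2^r}$.

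The step I expect to require the most care is checking that these contributions assemble into a genuine direct sum, so that $M$ is, in a suitable basis, the identity matrix of size $N_b:=\sum_{e\equiv b\,(2)}\frac{2^r!}{e!(\frac{b-e}{2})!(2^r-\frac{b+e}{2})!}q^e$. The point is that in every monomial occurring in $M$ the $U$-basis vector determines, coordinate by coordinate, which type was used at that position ($q+1$ for $(2,0,0)$, an index in $\{1,\dots,q\}$ for $(1,1,0)$, $0$ for $(0,2,0)$), and likewise on the $V$-side; hence the $x$-monomials of $M$ are pairwise distinct, the $y$-monomials of $M$ are pairwise distinct, and the pairing sending each $x$-monomial to its partner $y$-monomial is a bijection. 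Consequently $M=\sum_{i=1}^{N_b}x'_i\otimes y'_i$ in suitable bases, so $\A^{\otimes 2^r}(2^{r+1}-b,b,0)\cong\braket{1,N_b,1}$, and $V_\rho(\braket{1,N_b,1})\ge N_b^{\rho/3}$ completes the proof. (The hypothesis that the third coordinate vanishes is what makes $(W_0)^{\otimes 2^r}$ one-dimensional, and hence the component a single matrix product rather than a direct sum of smaller ones.)
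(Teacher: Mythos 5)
Your proof is correct and uses the same idea as the paper's proof of the analogous Lemma~\ref{lemma:value-s} for the asymmetric form~$\B$: identify the component $\A^{\otimes 2^r}(2^{r+1}-b,b,0)$ as a single matrix-multiplication tensor $\braket{1,N_b,1}$ (here by observing that a vanishing third coordinate forces all factors into $\{(2,0,0),(1,1,0),(0,2,0)\}$, so the $z$-part collapses to the one-dimensional $z_0^{\otimes 2^r}$, and then counting tuples via the multinomial coefficient and checking the rank-one terms are distinct), then invoke $V_\rho(\braket{1,N_b,1})\ge N_b^{\rho/3}$. The paper delegates the proof of this lemma to Claim~7 of \cite{WilliamsFull}, but your direct combinatorial argument is sound and is exactly the expected route.
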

Table \ref{table:B-param} presents, for $r\in\{1,2,3,4,5\}$,
the number of variables in the global optimization problem, 
the compatibility degree, and the best upper bound 
on $\omega$ we obtained by this approach. The programs used 
to derive these upper bounds can be found at \cite{files}, 
and use the Matlab software CVX for convex optimization.
We work out below in details the cases $r=1$ and $r=2$.\vspace{-3mm}

\begin{table}[h!]
\renewcommand\arraystretch{1}
\begin{center}
\caption{Analysis of $\A^{\otimes 2^r}$ using Algorithm $\mathcal{A}$.}\label{table:B-param}\vspace{1mm}
\begin{tabular}{|c|c|c|c|}
\hline
$\!r\!$&$\dim(\F(\supp(\A^{\otimes 2^r}),\mathbb{S}_3))$&$\chi(\supp(\A^{\otimes 2^r}),\mathbb{S}_3)$&upper bound obtained\bigstrut\\ 
\hline
\!1\!& 4&0&$\omega<2.3754770$\\
\!2\!& 10&2&$\omega<2.3729372$\\
\!3\!& 30&14&$\omega<2.3728675$\\
\!4\!& 102&70&$\omega<2.3728672$\\
\!5\!& 374&310&$\omega<2.3728671$\\
\hline
\end{tabular}
\end{center}
\end{table}\vspace{-3mm}

The case $r=1$ is easy to deal with, since the compatibility degree is zero,
and the 
values of all components but one can be computed directly using Lemma \ref{lemma:value-B}:
we have $V_\rho(\A^{\otimes 2}(4,0,0))=1$, $V_\rho(\A^{\otimes 2}(3,1,0))=(2q)^{\rho/3}$ and
$V_\rho(\A^{\otimes 2}(2,2,0))=(q^2+2)^{\rho/3}$.
Let $P$ be a probability distribution in $\D(\supp(t^{\otimes 2}),\mathbb{S}_3)$.
Write $P(4,0,0)=a_1$, $P(3,1,0)=a_2$, $P(2,2,0)=a_3$ and
$P(2,1,1)=a_4$. 
The partial distributions of $P$ are
\[
\myvec{P_1}=\myvec{P_2}=\myvec{P_3}=(a_1,2a_2,2a_3+a_4,2a_2+2a_4,2a_1+2a_2+a_3).
\]
Since $\chi(\supp(\A^{\otimes 2}),\mathbb{S}_3)=0$, 
we have $\Gamma_{\supp(t^{\otimes 2})}(P)=0$
and thus
Theorem \ref{th:main} 
gives the lower bound 
\[
V_\rho(\A^{\otimes 2})
\ge
\frac{(2q)^{6a_2\rho/3}(q^2+2)^{3a_3\rho/3}
\!\times [V_\rho(\A^{\otimes 2}(2,1,1))]^{3a_4}}
{a_1^{a_1}(2a_2)^{2a_2}(2a_3+a_4)^{2a_3+a_4}(2a_2+2a_4)^{2a_2+2a_4}(2a_1+2a_2+a_3)^{2a_1+2a_2+a_3}}.
\]
The value of the component $\A^{\otimes 2}(2,1,1)$ is computed as described in Section \ref{section:powers},
by considering its decomposition $D^2_{211}$. This decomposition corresponds to the identity
\begin{align*}
\A^{\otimes 2}(2,1,1)=
&\A(2,0,0)\otimes \A(0,1,1)+\A(1,1,0)\otimes \A(1,0,1)+\\
&\A(1,0,1)\otimes \A(1,1,0)+\A(0,1,1)\otimes \A(2,0,0),
\end{align*}
and has (tight) support $S^2_{211}\!=\!\{(2,0,0),(1,1,0),(1,0,1),(0,1,1)\}$. 
Lemma~\ref{lemma:sym3} shows that 
this decomposition is $\{\identi,\pi\}$-invariant,
where $\pi$ is the permutation that exchanges $(2,0,0)$ and $(0,1,1)$
and exchanges $(1,1,0)$ and $(1,0,1)$.
Let $P'$ be a distribution on $\D(S^2_{211},\{\identi,\pi\})$, and write $P'(2,0,0)=P'(0,1,1)=b_1$ and 
$P'(1,1,0)=P'(1,0,1)=b_2$. The partial distributions are 
$\myvec{P'_1}=(b_1,2b_2,b_1)$ and 
$\myvec{P'_2}=\myvec{P'_3}=(b_1+b_2,b_1+b_2)$.
Since $\Gamma_{S^2_{211}}(P')=0$, we have
\[
V_\rho(\A^{\otimes 2}(2,1,1))\ge\frac{(1\times q^{\rho/3})^{2b_1}(q^{\rho/3}\times q^{\rho/3})^{2b_2}}
{\left[b_1^{2b_1}(2b_2)^{2b_2}(b_1+b_2)^{4(b_1+b_2)}\right]^{1/3}}.
\]
For $\rho=2.3754770$ and $q=6$, we obtain $V_\rho(\A^{\otimes 2}(2,1,1))>27.35608$
by taking $b_1=0.01378$ and $b_2=0.48622$. Then, taking
$a_1=0.00023$, $a_2=0.01250$, $a_3=0.10254$ and $a_4=(1-3a_1-6a_2-3a_3)/3$,
we obtain 
\[
V_\rho(\A^{\otimes 2})>64.00000357>(q+2)^2.
\]
Using Theorem \ref{th:value} and the fact that $\underline R(\A^{\otimes 2})\le (q+2)^2$, 
we conclude that $\omega<2.3754770$. This is the same upper bound as the bound  found in Section 8 of
\cite{Coppersmith+90}.

For $r=2$, we first need to compute lower bounds on the values of the ten components $\A^{\otimes 4}(a,b,c)$. 
Five of them can be computed directly using Lemma \ref{lemma:value-B}, while the remaining five are computed
using Algorithm $\mathcal{A}$. Table \ref{table:2} gives the lower bounds obtained for $\rho=2.3729372$ and $q=5$. 
We then apply 
Algorithm $\mathcal{A}$ on $\A^{\otimes 4}$, and obtain 
\[
V_{\rho}(\A^{\otimes 4})>2401.00013>(q+2)^4
\]
for
the probability distributions $\hat P$ and~$\hat Q$ 
given in Table \ref{table:2}, which gives the upper bound
$\omega<2.3729372$.

\begin{table}[h!]
\renewcommand\arraystretch{1}
\begin{center}
\caption{
Values of the components and optimal probability distributions for $\A^{\otimes 4}$, 
with $\rho=2.3729372$ and $q=5$, computed using Algorithm $\mathcal{A}$. 
In this table, $d$ represents $\dim(\F(S^4_{abc},\{\identi,\pi\}))$ 
and $\chi$ represents $\chi(S^4_{abc},\{\identi,\pi\})$, where $\pi$
is defined in Lemma \ref{lemma:sym3}.
The symbol $-$ means that the value is not relevant, since lower bounds on the
values can be computed directly using Lemma \ref{lemma:value-B}.}\label{table:2}\vspace{3mm}
\begin{tabular}{|c|c|c|r|c|c|}
\hline
\multirow{2}{*}{$\!\!(a,b,c)\!\!$}& 
\multirow{2}{*}{$\!d\!$}& 
\multirow{2}{*}{$\!\chi\!$}& 
lower bound on&
\multirow{2}{*}{$\hat P(a,b,c)$}&
\multirow{2}{*}{$\hat Q(a,b,c)$}\\ 
&&&$V_\rho(\A^{\otimes 4}(a,b,c))$&&\\
\hline
(8,0,0)&\!--\!&\!--\!&1.000000&    0.00000013&0.00000013\\
(7,1,0)&\!--\!&\!--\!&10.692703&    0.00001649&0.00001649\\
(6,2,0)&\!--\!&\!--\!&53.738198&  0.00000000&0.00049685\\
(6,1,1)&\!2\!&\!0\!&66.354789&  0.00164314&0.00064945\\
(5,3,0)&\!--\!&\!--\!&149.196694&  0.01178178&0.00474178\\
(5,2,1)&\!3\!&\!0\!&235.605709&  0.00259744&0.00963744\\
(4,4,0)&\!--\!&\!--\!&223.037068& 0.00000000&0.01308631\\
(4,3,1)&\!4\!&\!0\!&472.727437&  0.05233355&0.04628725\\
(4,2,2)&\!5\!&\!0\!&605.359824&0.08605608&0.07197608\\
(3,3,2)&\!5\!&\!1\!&793.438218&0.11217546&0.12526177\\
\hline
\end{tabular}
\end{center}
\end{table}

\subsection{Analyzing the powers using both Algorithms $\boldsymbol{\mathcal{A}}$ and $\boldsymbol{\mathcal{B}}$}\label{sub:mixed}
As mentioned in the introduction, the best known upper bound on $\omega$
obtained from the fourth power of~$\A$ is 
$\omega<2.3729269$, which is slightly better than what
we obtained in the previous subsection using Algorithm
$\mathcal{A}$. This better bound can actually be obtained by using 
Algorithm~$\mathcal{B}$ instead of Algorithm $\mathcal{A}$
when computing the lower bound on $V_\rho(\A^{\otimes 4})$.
More precisely, in this case the optimization problem in 
Algorithm $\mathcal{B}$ 
asks to minimize $-\Psi_{t,\rho}(P)$ such that 
$P\in \D(\supp(\A^{\otimes 4}),\mathbb{S}_3)$ 
and $P$ satisfies two additional constraints, since 
$\chi(\supp(\A^{\otimes 4}),\mathbb{S}_3)=2$. These two constraints (the same as in \cite{Davie+13,Stothers10,WilliamsSTOC12}) are:
\begin{align*}
\log(P(6,2,0))+&2\log(P(4,3,1))-\log(P(4,4,0))\\
&-\log(P(6,1,1))-\log(P(3,3,2))=0,\\
\log(P(5,3,0))+&\log(P(4,3,1))+\log(P(4,2,2))\\
-\log(P(4,4,0))&-\log(P(5,2,1))-\log(P(3,3,2))=0.
\end{align*}
They are highly non-convex but, since their number is only two,
the resulting optimization problem can be solved fairly easily,
giving the same upper bound $\omega<2.3729269$ as the bound reported in \cite{WilliamsSTOC12}.

We can also use Algorithm $\mathcal{B}$ instead of Algorithm $\mathcal{A}$
to analyze $\A^{\otimes 8}$, but 
solving the corresponding optimization problems in this case was delicate and required a combination of
several tools.
We obtained lower bounds on the values of each component by solving 
the non-convex optimization problems using the NLPSolve function in Maple,
while the lower bound on $V_\rho(\A^{\otimes 8})$ has been obtained by solving 
the corresponding optimization problem (with 30 variables and 14 non-convex constraints)
using the fmincon function in Matlab. All the programs used are available at \cite{files}, and
the numerical solutions are given for 
$\rho=2.3728642$ and $q=5$ in Table \ref{table:B-power8}. 
The probability distribution of Table \ref{table:B-power8} gives
\[
V_\rho(\A^{\otimes 8})>5764802.8>(q+2)^8,
\]
which shows that
$\omega<2.3728642$.

\begin{table}[h!]
\renewcommand\arraystretch{1}
\begin{center}
\caption{Values of the components and optimal probability distribution for $\A^{\otimes 8}$, 
with $\rho=2.3728642$ and $q=5$, computed using Algorithm $\mathcal{B}$. 
In this table, $d$ represents $\dim(\F(S^{8}_{abc},\{\identi,\pi\}))$ 
and $\chi$ represents $\chi(S^8_{abc},\{\identi,\pi\})$, where $\pi$
is defined in Lemma \ref{lemma:sym3}.
The symbol $-$ means that the value is not relevant, since lower bounds on the
values can be computed directly using Lemma \ref{lemma:value-B}.}\label{table:B-power8}\vspace{3mm}
\begin{tabular}{|c|c|c|r|c|}
\hline
\multirow{2}{*}{$\!(a,b,c)\!$}& 
\multirow{2}{*}{$d$}& 
\multirow{2}{*}{$\chi$}& 
lower bound on&
\multirow{2}{*}{$\tilde P(a,b,c)$}\\ 
&&&$V_\rho(\A^{\otimes 8}(a,b,c))$&\\
\hline
(16,0,0)& --&--&1.0000&0.000000000049\\
(15,1,0)& --&--&18.4993&0.000000000045\\
(14,2,0)& --&--&179.5755&0.000000001353\\
(14,1,1)& 2&0&202.5694&0.000000001461\\
(13,3,0)& --&--&1134.4127&0.000000021882\\
(13,2,1)& 3&0&1465.5626&0.000000031703\\
(12,4,0)& --&--&5040.7184&0.000000596688\\
(12,3,1)& 4&0&7316.8139&0.000001088645\\
(12,2,2)& 5&0&8551.5945&0.000001460028\\
(11,5,0)& --&--&16255.7058&0.000007710948\\
(11,4,1)& 5&0&26646.9787&0.000017673415\\
(11,3,2)& 6&1&35516.8360&0.000029848564\\
(10,6,0)& --&--&38300.8686&0.000052951447\\
(10,5,1)& 6&0&71988.3451&0.000156556413\\
(10,4,2)& 8&1&108961.6972&0.000332160313\\
(10,3,3)& 8&2&124253.3641&0.000418288187\\
(9,7,0)& --&--&65227.2515&0.000183839144\\
(9,6,1)& 7&0&143152.3479&0.000733913103\\
(9,5,2)& 9&2&247939.6689&0.002008638773\\
(9,4,3)& 10&3&320986.2915&0.003177631988\\
(8,8,0)& --&--&78440.0862&0.000290035432\\
(8,7,1)& 8&0&204758.6410&0.001676299730\\
(8,6,2)& 11&2&410936.6736&0.006194731655\\
(8,5,3)& 12&4&608259.9214&0.012641673731\\
(8,4,4)& 13&4&690871.1760&0.015881023028\\
(7,7,2)& 11&3&485122.9853&0.008968430903\\
(7,6,3)& 13&5&830558.9804&0.024712243136\\
(7,5,4)& 14&6&1076870.7243&0.040046668103\\
(6,6,4)& 15&6&1244849.8786&0.054072943466\\
(6,5,5)& 15&7&1421227.6017&0.069752589222\\
\hline
\end{tabular}
\end{center}
\end{table}

While the non-convex optimization problems of Algorithm $\mathcal{B}$ 
seem intractable when studying higher
powers of~$\A$, these powers can be analyzed by applying Algorithm $\mathcal{A}$, 
as in the previous subsection,
but using this time the lower bounds on the values of the components $V_\rho(\A^{\otimes 8}(a,b,c))$ obtained by Algorithm~$\mathcal{B}$ as a starting point. This strategy can be equivalently described as using 
Algorithm $\mathcal{A}$
to analyze powers of 
$\A'$, where $\A'=\A^{\otimes 8}$, with lower bounds on the values of each component of $\A'$ computed by Algorithm~$\mathcal{B}$.
The lower bounds we obtain using this method 
for powers 16 and 32 are given in Table \ref{table:chart2} and Figure \ref{fig:graph}. They show that
$\omega<2.3728640$ and
$\omega<2.3728639$, respectively.

\begin{table}[h!]
\renewcommand\arraystretch{1}
\begin{center}
\caption{
Lower bounds on $V_\rho(\A^{\otimes 2^r})-(q+2)^{2^r}$ for $r=4$ and $r=5$,
with $q=5$, computed by performing the analysis on the second and fourth power
of $\A^{\otimes 8}$ using Algorithm $\mathcal{A}$, with lower bounds of each component of $\A^{\otimes 8}$ 
computed by Algorithm $\mathcal{B}$. Plots representing the data of this table are given 
in Figure~\ref{fig:graph}.
}\label{table:chart2}\vspace{2mm}
\begin{tabular}{|c|r|r|}
\hline
\multirow{2}{*}{$\rho$}& 
lower bound on$\:\:\:$&
lower bound on$\:\:\:$\\
& 
$V_\rho(\A^{\otimes 16})-(q+2)^{16}$&
$V_\rho(\A^{\otimes 32})-(q+2)^{32}$\\
\hline
2.3728670&8.3460\:E+08&5.7365\:E+22\\
2.3728660& 5.5802\:E+08&3.8857\:E+22\\
2.3728650&2.8110\:E+08&2.0175\:E+22\\
2.3728640& 2.5866\:E+06&3.1912\:E+21\\
2.3728639&-2.4921\:E+07&1.2306\:E+21\\
2.3728638& -5.2465\:E+07&-1.0995\:E+21\\
2.3728630&-2.7407\:E+08&-1.5750\:E+22\\
2.3728620&-5.5031\:E+08&-3.3889\:E+22\\
2.3728610&-8.2774\:E+08&-5.3088\:E+22\\
2.3728600& -1.1051\:E+09&-7.0489\:E+22\\
\hline
\end{tabular}
\end{center}
\end{table}

\begin{figure*}[ht!]
     \begin{center}
       \subfigure[Power 16]{%
            \label{fig:first}
            \includegraphics[scale=1.2]{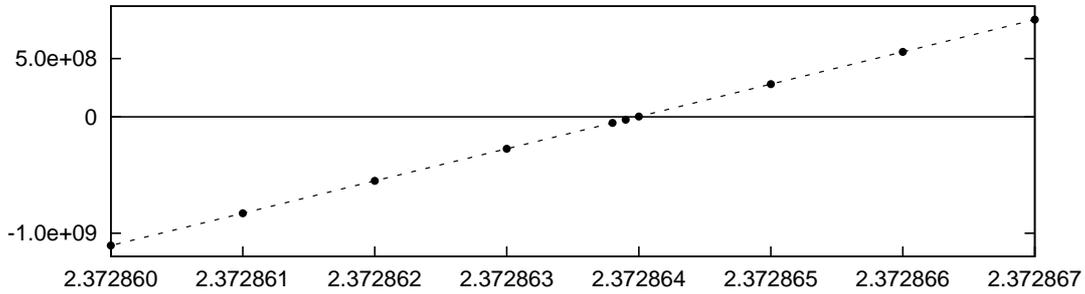}
        }
      \subfigure[Power 32]{%
            \label{fig:second}
            \includegraphics[scale=1.2]{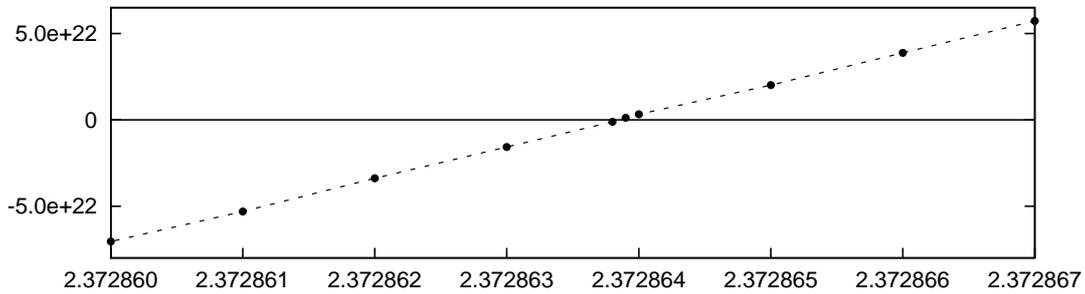}
        }
      \end{center}\vspace{-3mm}
      \caption{Graphs representing the data of Table \ref{table:chart2}.
      The horizontal axes represent $\rho$, while the vertical axes 
      represent the lower bounds on $V_\rho(\A^{\otimes 16})-(q+2)^{16}$,
      for graph (a), 
      and  $V_\rho(\A^{\otimes 32})-(q+2)^{32}$, for graph (b).}\label{fig:graph}
\end{figure*}
\clearpage
\newpage

\section*{Acknowledgments}
The author is grateful to Virginia Vassilevska Williams for helpful correspondence,
and to Harumichi Nishimura, Suguru Tamaki and Yuichi Yoshida for stimulating discussions.
This work is supported by
the Grant-in-Aid for Young Scientists~(B)~No.~24700005 of the Japan Society for the Promotion of Science
and the Grant-in-Aid for Scientific Research on Innovative Areas~No.~24106009 of
the Ministry of Education, Culture, Sports, Science and Technology in Japan.


\newpage
\appendix
\section*{APPENDIX}
\section{Proof of Theorem 4.1}
In this appendix we prove 
Theorem \ref{th:main}.
Before giving the proof in Section \ref{sub:aproof}, we present
some preliminaries in Sections \ref{sub:nvalued} and \ref{sub:comblem}.
\subsection{Rational-valued probability distributions} \label{sub:nvalued}
Since it will be convenient to deal with probability distributions with rational 
values, instead of arbitrary real values, we first introduce the following definition.
Here we use the notations of Section \ref{sec:prelim} (in particular, $S$ denotes a finite
subset of $\Int\times\Int\times\Int$ and $\alpha_1,\alpha_2,\alpha_3$ denote its coordinate
functions).
\begin{definition}
Let $N$ be a positive integer.
A distribution $P\in \D(S)$ is $\frac{1}{N}$-valued
if $P(\myvec{s})$ is a multiple of $1/N$ for all $\myvec{s}\in S$.
\end{definition}
For any positive integer $N$, any $\ell\in\{1,2,3\}$, and any $\frac{1}{N}$-valued distribution $P\in \D(S)$,
we define the quantity
\begin{align*}
\Phi_{\ell,N}(P)&=\frac{N!}{\prod_{a\in \alpha_\ell(S)}(N\cdot P_\ell(a))!},
\end{align*}
and, for any $1/N$-valued distribution $Q\in \D(S)$ compatible with $P$, 
we also define the quantity
\begin{align*}
\Phi'_{\ell,N}(P,Q)&=\prod_{a\in \alpha_\ell(S)} \frac{(N\cdot P_\ell(a))!}{\prod_{\myvec{s}\in \alpha^{-1}_\ell(a)}(N\cdot Q(\myvec{s}))!}.
\end{align*}
Note that both quantities are integers, since $\Phi_{\ell,N}(P)$ can be rewritten as a multinomial coefficient and 
$\Phi'_{\ell,N}(P,Q)$ as a product of multinomial coefficients.
The following lemma, proved in a straightforward way using Stirling's approximation, 
will also be useful.

\begin{lemma}\label{lemma:asympt}
For any $\delta>0$ there exists a value $N_0$ such that, 
for any $\ell\in\{1,2,3\}$ and any $N\ge N_0$,
the inequality
\[
\left|
H(P_\ell)-\frac{1}{N}\log{(\Phi_{\ell,N}(P))}
\right|
\le \delta
\]
holds for all $1/N$-valued distribution $P\in \D(S)$, 
and
\[
\left|
(H(Q)-H(P_\ell))-\frac{1}{N}\log\left(\Phi'_{\ell,N}(Q,P)\right)\right|\le \delta
\]
for all $1/N$-valued compatible distributions $P,Q\in \D(S)$.
\end{lemma}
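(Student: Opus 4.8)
The plan is to reduce both inequalities to the quantitative form of Stirling's approximation, namely $\log(n!) = n\log n - n + \tfrac{1}{2}\log(2\pi n) + O(1/n)$ for $n\ge 1$, together with the boundary value $\log(0!) = 0$, which is consistent with the convention $0\log 0 = 0$ used in the definition of the entropy $H$. The key structural observation is that $\Phi_{\ell,N}(P)$ is a single multinomial coefficient, and that each factor of $\Phi'_{\ell,N}$ is a multinomial coefficient (this uses compatibility: $\sum_{\myvec s\in\alpha_\ell^{-1}(a)}Q(\myvec s) = P_\ell(a)$), so in both cases the logarithm is a difference of logarithms of factorials, to which Stirling applies termwise.

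First I would expand $\log\big(\Phi_{\ell,N}(P)\big) = \log(N!) - \sum_{a\in\alpha_\ell(S)}\log\big((N P_\ell(a))!\big)$ and substitute Stirling into each factorial. Since $\sum_{a}P_\ell(a) = 1$, the $N\log N$ contributions cancel and the terms linear in $N$ cancel; what survives at order $N$ is exactly $-\sum_a N P_\ell(a)\log P_\ell(a) = N\,H(P_\ell)$. The second identity is handled in the same way: writing $\log(\Phi'_{\ell,N})$ as a sum over $a$ of differences of $\log$-factorials and collecting the denominator over the partition $\{\alpha_\ell^{-1}(a)\}_a$ of $S$, the same cancellations occur (again using $\sum_{a}P_\ell(a) = 1$ and compatibility), leaving a leading term equal to $N$ times the claimed entropy difference, plus lower-order terms.

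The only delicate point — and the step I would be most careful about — is \emph{uniformity}: the error term must depend on $N$ but not on the particular distributions $P$ or $Q$. This holds because the number of factorials involved is bounded by a constant depending only on $S$ (namely $|\alpha_\ell(S)|+1$, respectively $|S|+|\alpha_\ell(S)|$), and because $P$ and $Q$ are $1/N$-valued, so every factorial argument is an integer in $\{0,1,\dots,N\}$; hence each surviving Stirling term (the $\tfrac12\log(2\pi\,\cdot\,)$ piece, bounded by $\tfrac12\log(2\pi N)$ since each argument is at most $N$, and the $O(1/n)$ piece, bounded since a nonzero argument is at least $1$) contributes $O(\log N)$. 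Summing, $\big|\log(\Phi_{\ell,N}(P)) - N\,H(P_\ell)\big| \le C_S\log N$ for a constant $C_S$ depending only on $S$, and similarly for $\Phi'_{\ell,N}$, uniformly over all admissible $1/N$-valued distributions. Dividing by $N$, taking the maximum over the three values of $\ell$, and choosing $N_0$ so that $C_S(\log N)/N\le\delta$ for all $N\ge N_0$ finishes the proof.
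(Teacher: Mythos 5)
Your proof is correct and carries out precisely the ``straightforward application of Stirling's approximation'' that the paper invokes without detail, and your uniformity argument is exactly the point that needs care: there are at most $|S|+|\alpha_\ell(S)|+1$ factorials, each with an integer argument in $\{0,\dots,N\}$, so the $\tfrac12\log(2\pi\,\cdot\,)$ corrections are nonnegative and at most $\tfrac12\log(2\pi N)$ while the $O(1/n)$ corrections are bounded by a universal constant for $n\ge 1$, giving a Stirling error $O(\log N)$ uniform over all $1/N$-valued $P,Q$, and dividing by $N$ finishes the argument. One small note on reading the statement: the paper writes $\Phi'_{\ell,N}(Q,P)$, which under the given definition has $(N\cdot Q_\ell(a))!$ in the numerators and $(N\cdot P(\myvec{s}))!$ in the denominators, and would therefore expand to $H(P)-H(P_\ell)$ rather than the claimed $H(Q)-H(P_\ell)$; the intended quantity---and the one actually used in the proof of Theorem~\ref{th:main}, where $\Nn_\ell=\sum_Q\Phi'_{\ell,N}(P,Q)$---is $\Phi'_{\ell,N}(P,Q)$, with numerators $(N\cdot P_\ell(a))!$ and denominators $\prod_{\myvec{s}\in\alpha_\ell^{-1}(a)}(N\cdot Q(\myvec{s}))!$, which is exactly what you compute (your appeal to the compatibility identity $\sum_{\myvec{s}\in\alpha_\ell^{-1}(a)}Q(\myvec{s})=P_\ell(a)$ shows you read it the intended way).
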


\subsection{A combinatorial lemma}\label{sub:comblem}
The proof of Theorem \ref{th:main} 
will use a generalization of a combinatorial 
result proved in \cite{Coppersmith+90},
which we describe below.

Let $S$ be a finite
subset of $\Int\times\Int\times\Int$.
Let $M$ be a large integer and consider a set
\[
\Lambda\subseteq \Int^M\times \Int^M\times \Int^M
\] 
such that,
for any $(\myvec{u},\myvec{v},\myvec{w})\in \Lambda$ and
any $\ell\in\{1,\ldots,M\}$,  the element $(u_\ell,v_\ell,w_\ell)$ is in $S$.
Let $\beta_1\colon\Lambda\to \Int^M$, $\beta_2\colon\Lambda\to \Int^M$ and 
$\beta_3\colon\Lambda\to \Int^M$ be the 
three coordinate functions of $\Lambda$. 

Let $\Lambda^\ast$ be a subset of $\Lambda$.
Assume that there exist
three integers $T$, $\Nn$, $\Nn^\ast$ such that,
for each $\ell\in \{1,2,3\}$, the following three conditions hold:
\begin{itemize}
\item[(i)]
$|\beta_\ell(\Lambda)|=T$ and $|\beta_\ell(\Lambda^\ast)|=T$,
\item[(ii)]
$|\beta^{-1}_\ell(\beta_\ell(\myvec{u},\myvec{v},\myvec{w}))|=\Nn$ for all 
$(\myvec{u},\myvec{v},\myvec{w})\in \Lambda$,
\item[(iii)]
$|\beta^{-1}_\ell(\beta_\ell(\myvec{u},\myvec{v},\myvec{w}))\cap\Lambda^\ast|=\Nn^\ast$ for all 
$(\myvec{u},\myvec{v},\myvec{w})\in \Lambda^\ast$.
\end{itemize}
Condition (i) implies that $\beta_\ell(\Lambda^\ast)=\beta_\ell(\Lambda)$.
Condition (ii) means that $\beta_\ell$ is an $\Nn$-to-one map 
from $\Lambda$ to $\beta_\ell(\Lambda)$, 
while 
Condition~(iii) means that $\beta_\ell$ is an $\Nn^\ast$-to-one map from $\Lambda^\ast$ to $\beta_\ell(\Lambda^\ast)$.
The following lemma, which has been proven in \cite{Davie+13,LeGallFOCS12}, 
will be crucial for our analysis.
It essentially states that a large subset $\Delta\subseteq \Lambda^\ast$ can be 
derived from this construction,
such that each $\beta_\ell$ is a one-to-one map
from $\Delta$ to $\beta_\ell(\Delta)$.

\begin{lemma}\label{lemma:SS}
If $\Lambda$ and $\Lambda^\ast$ satisfy Conditions (i)-(iii) above, then
there exists three sets $F_1\subseteq \beta_1(\Lambda)$, $F_2\subseteq \beta_2(\Lambda)$ and $F_3\subseteq \beta_3(\Lambda)$
such that the set 
\[
\Delta=(\beta_1^{-1}(F_1)\times \beta_2^{-1}(F_2)\times \beta_3^{-1}(F_3))\cap \Lambda
\]
is a subset of $\Lambda^\ast$ and satisfies the following property for each $\ell\in\{1,2,3\}$:
\begin{equation}\label{eq4}
|\beta^{-1}_\ell(\beta_\ell(\myvec{u},\myvec{v},\myvec{w}))\cap\Delta|=1 \hspace{2mm}\textrm{for all } 
(\myvec{u},\myvec{v},\myvec{w})\in \Delta.
\end{equation}
Moreover, 
for any $\epsilon>0$, if $M$ is taken large enough then 
$\Delta$  is such that
\[
|\Delta|\ge c\times \frac{T\Nn^\ast}{\Nn^{1+\epsilon}},
\]
where $c$ is a constant depending only on $\epsilon$.
\end{lemma}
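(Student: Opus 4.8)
The plan is to prove Lemma~\ref{lemma:SS} by the probabilistic argument of Coppersmith and Winograd~\cite{Coppersmith+90}, in the asymmetric refinement used in~\cite{Davie+13,LeGallFOCS12}: one uses a random linear hash of the $M$ coordinate positions into a prime field together with a progression-free (Salem--Spencer/Behrend) subset of that field to cut out of $\Lambda$ a combinatorial box $\beta_1^{-1}(F_1)\times\beta_2^{-1}(F_2)\times\beta_3^{-1}(F_3)$ whose intersection with $\Lambda$ is, with high probability over the random choices, a partial matching that meets $\Lambda^{\ast}$ in a set of the required size.

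First I would fix $\epsilon>0$, choose $\eta>0$ small compared with $\epsilon$, and pick a prime $p$ equal to a suitable fixed power of $\Nn$; since $\Nn\to\infty$ as $M\to\infty$, Behrend's construction then supplies a set $B\subseteq\Int_p$ with $|B|\ge p^{1-\eta}$ and no nontrivial solution to $x+x'\equiv 2x''\pmod p$. Next I would draw weights $w_1,\dots,w_M\in\Int_p$ uniformly and independently and set $h_\ell(\myvec{x})=\sum_{i=1}^{M}w_ix_i\bmod p$ for each coordinate slot $\ell$; because every coordinate triple $(u_i,v_i,w_i)$ lies in $S$, and $S$ is tight, any $(\myvec{u},\myvec{v},\myvec{w})\in\Lambda$ satisfies $h_1(\myvec{u})+h_2(\myvec{v})+h_3(\myvec{w})\equiv D\pmod p$ for a value $D$ depending only on the weights. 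Drawing also random translates $r_1,r_2,r_3\in\Int_p$, I would put $F_\ell=\{\myvec{x}\in\beta_\ell(\Lambda):h_\ell(\myvec{x})\in B+r_\ell\}$ and let $\Delta_0$ be the resulting box intersected with $\Lambda$.

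The heart is a first-moment estimate over the $w_i$ and the $r_\ell$. On the size side, a fixed triple of $\Lambda^{\ast}$ survives into $\Delta_0$ whenever all three of its hash values land in the corresponding translate of $B$; apart from a negligible fraction of ``degenerate'' triples these events behave like independent events of probability $|B|/p$, so $\mathbb{E}\big[|\Delta_0\cap\Lambda^{\ast}|\big]$ is of order $|\Lambda^{\ast}|(|B|/p)^{3}=T\Nn^{\ast}(|B|/p)^{3}$, which exceeds $c\,T\Nn^{\ast}/\Nn^{1+\epsilon}$ once $\eta$ is small and $M$ large. On the matching side, if two distinct triples of $\Lambda$ share a coordinate and both survive, then $h_1+h_2+h_3\equiv D$ forces a linear relation among elements of $B$; the progression-free structure of $B$ rules out the ``aligned'' solutions, so the only way both can survive is through a genuine hash collision $h_\ell(\myvec{x})=h_\ell(\myvec{y})$ with $\myvec{x}\neq\myvec{y}$, an event of probability $O(1/p)$ by pairwise independence of $h_\ell$. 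Since $\Lambda$ has only $O(T\Nn^{2})$ coordinate-sharing pairs, the expected number of surviving ones is negligible against the bound above for a large enough power $p$. I would then fix weights and translates for which both expectations are beaten, delete one offending coordinate per surviving conflicting pair from the relevant $F_\ell$, and note that the remaining configuration is a genuine matching; restricting each $F_\ell$ to the $\ell$-th coordinates still used by the surviving triples of $\Lambda^{\ast}$ then, because the configuration is already a matching, removes exactly the triples lying outside $\Lambda^{\ast}$, leaving $\Delta=(\beta_1^{-1}(F_1)\times\beta_2^{-1}(F_2)\times\beta_3^{-1}(F_3))\cap\Lambda$ with $\Delta\subseteq\Lambda^{\ast}$, property~\eqref{eq4}, and $|\Delta|\ge c\,T\Nn^{\ast}/\Nn^{1+\epsilon}$.

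The step I expect to be the main obstacle is the conflict analysis together with the choice of the parameters $p$ and $\eta$: one has to verify carefully that, once all three hash values are forced into translates of~$B$, every configuration of two coordinate-sharing triples of $\Lambda$ that would survive in the box reduces, through the identity $h_1+h_2+h_3\equiv D$, either to a nontrivial three-term progression in $B$ (impossible) or to a rare hash collision, and then to choose the exponent of $p$ and the value of $\eta$ so that the error term from surviving collisions stays well below the main term $T\Nn^{\ast}(|B|/p)^{3}$ while $B$ is still dense enough that this main term is at least $c\,T\Nn^{\ast}/\Nn^{1+\epsilon}$. The remaining points --- degenerate coordinate vectors that are scalar multiples of one another modulo $p$, pairs of triples that agree in two coordinates, the Stirling-type approximations hidden in ``behaves like independent events'', and choosing the translates $r_\ell$ compatibly with the progression-free structure of $B$ --- are routine but require attention.
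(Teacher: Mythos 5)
The paper does not prove this lemma itself; it explicitly defers to \cite{Davie+13,LeGallFOCS12} and notes that the statement is a generalisation of the combinatorial construction in \cite{Coppersmith+90}. Your outline correctly identifies the ingredients of that argument --- a Behrend (Salem--Spencer) set modulo a prime $p$, random linear hash functions exploiting tightness, a first-moment estimate, pruning, and a final restriction to $\Lambda^\ast$ --- so at a high level you are on the right track.

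However, drawing the three translates $r_1,r_2,r_3$ independently and uniformly breaks the mechanism that makes the argument go through, and this is a genuine gap. With independent translates, the tightness identity $h_1+h_2+h_3\equiv D$ no longer turns the survival condition into a three-term arithmetic progression, so the Salem--Spencer structure never constrains the survival event: each triple survives with probability $\approx(|B|/p)^3$, and nothing forces $h_1-r_1=h_2-r_2=h_3-r_3$. Consequently your conflict analysis is incorrect. For two surviving triples $(\myvec{u},\myvec{v},\myvec{w})$ and $(\myvec{u},\myvec{v'},\myvec{w'})$ sharing their first coordinate, subtracting the two tightness identities gives only an additive quadruple $b_2+b_3=b_2'+b_3'$ with $b_2,b_2',b_3,b_3'\in B$; progression-freeness of $B$ kills non-trivial solutions of $x+y=2z$ but does not forbid additive quadruples, so the ``aligned'' solutions you want to rule out are not ruled out. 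Tracking your parameters accordingly, the expected number of surviving $\Lambda^\ast$-triples is $\sim T\Nn^\ast(|B|/p)^3$ while the expected number of conflicting pairs is $\sim T\Nn^2(|B|/p)(|B|^2/p^2)^2$; balancing these against the target $c\,T\Nn^\ast/\Nn^{1+\epsilon}$ only succeeds when $\epsilon\ge 1/2$, not for all $\epsilon>0$. The fix used in \cite{Coppersmith+90,Davie+13,LeGallFOCS12} is to choose the hash functions (or equivalently to tie the translates to the constant $D$) so that the three hash values of any $\Lambda$-triple automatically satisfy $b_X+b_Y\equiv 2b_Z$; requiring $b_X,b_Y,b_Z\in B$ then forces $b_X=b_Y=b_Z$ by progression-freeness, the survival probability drops to $\sim|B|/p^2$, and the conflict analysis genuinely reduces to hash collisions of probability $O(1/p)$. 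Your final cleanup step --- shrinking each $F_\ell$ to the coordinates used by surviving $\Lambda^\ast$-triples once the matching property holds --- is fine.
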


\subsection{Proof of Theorem 4.1}\label{sub:aproof}
We are now ready to give the proof of the theorem.

\begin{proof}[Proof of Theorem \ref{th:main}]
We will denote $\alpha_1,\alpha_2,\alpha_3\colon S\to \Int$ the three coordinate functions of $S$, 
as in Section~\ref{sec:prelim}, and write
$I=\alpha_1(S)$,  $J=\alpha_2(S)$ and $K=\alpha_3(S)$

Assume that the distribution $P$ has rational values. 
This assumption can be done without loss of generality since~$P$ can be approximated with arbitrary precision by a probability distribution with rational values.
We will
take an integer $N$ large enough so that, for each $\myvec{s}\in S$, the value $N\cdot P(\myvec{s})$ is an integer
(i.e.,~$P$ is $\frac{1}{N}$-valued).

Consider the trilinear form $t^{\otimes N}$. The decomposition $D$ of $t$
induces a decomposition of this trilinear form with support 
\begin{align*}
\supp(t^{\otimes N})=
\Big\{
(\myvec{a},\myvec{b},\myvec{c})\in I^N\!\!\times\! J^N\!\!\times\! K^N
\:|\:(a_\ell,&b_\ell,c_\ell)\in S \textrm{ for all }
\ell\in\{1,\ldots,N\}
\Big\},
\end{align*}
and components 
\[
t^{\otimes N}(\myvec{a},\myvec{b},\myvec{c})=\bigotimes_{\ell=1}^N t(a_\ell,b_\ell,c_\ell).
\] 
For any element $(\myvec{a},\myvec{b},\myvec{c})$ in $\supp(t^{\otimes N})$,
we define the type of $(\myvec{a},\myvec{b},\myvec{c})$
as the $\frac{1}{N}$-valued probability distribution $Q$ on $S$ such that
\[
Q(\myvec{s})=
\frac{|\{\ell \in \{1,\ldots,N\}\:|\: (a_\ell,b_\ell,c_\ell)=\myvec{s}\}|}{N}
\]
for all $\myvec{s}\in S$. 

For any $\myvec{a}\in I^N$, we say that 
$\myvec{a}$ has distribution~$P_1$ if 
\[
|\{\ell \in \{1,\ldots,N\}\:|\: a_\ell=i\}|=N\cdot P_1(i)\hspace{2mm}\textrm{ for all } i\in I.
\]
Similarly, we say that $\myvec{b}\in J^N$ has distribution~$P_2$ if  
\[
|\{\ell \in \{1,\ldots,N\}\:|\: b_\ell=j\}|=N\cdot P_2(j)\hspace{2mm}\textrm{ for all } j\in J,
\]
and say that $\myvec{c}\in K^N$ has distribution~$P_3$ if 
\[
|\{\ell \in \{1,\ldots,N\}\:|\: c_\ell=k\}|=N\cdot P_3(k)\hspace{2mm}\textrm{ for all } k\in K,
\]
respectively.
Define the sets 
\begin{align*}
\mset{S}_1&=\{\myvec{a}\in I^N\:|\: \myvec{a} \textrm{ has distribution }P_1\},\\
\mset{S}_2 &=\{\myvec{b}\in J^N\:|\: \myvec{b} \textrm{ has distribution } P_2\},\\
\mset{S}_3 &=\{\myvec{c}\in K^N\:|\: \myvec{c} \textrm{ has distribution } P_3\},\\
\mset{S}&=(\mset{S}_1\times \mset{S}_2\times \mset{S}_3)\cap \supp(t^{\otimes N}).
\end{align*}
We have $|\mset{S}_\ell|=\Phi_{\ell,N}(P)$ for each $\ell\in\{1,2,3\}$.
We now perform a first pruning. For each $(\myvec{a},\myvec{b},\myvec{c})\in \supp(t^{\otimes N})\setminus \mset{S}$, 
we do as follows: if $\myvec{a}\notin \mset{S}_1$, then we set to zero all the $x$-variables appearing in 
$t(\myvec{a},\myvec{b},\myvec{c})$; if $\myvec{b}\notin \mset{S}_2$, then we set to zero all the $y$-variables appearing in 
$t(\myvec{a},\myvec{b},\myvec{c})$; if $\myvec{c}\notin \mset{S}_3$, then we set to zero all the $z$-variables appearing in 
$t(\myvec{a},\myvec{b},\myvec{c})$.
After this pruning, the remaining form is
\begin{equation}\label{eq:dec}
\mset{t}=\sum_{(\myvec{a},\myvec{b},\myvec{c})\in \mset{S}}t^{\otimes N}(\myvec{a},\myvec{b},\myvec{c}).
\end{equation}
Note that Eq.~(\ref{eq:dec}) actually represents a decomposition of $\mset{t}$ with support~$\mset{S}$.
Up to this point, we have shown that $\mset{t}\degen t^{\otimes N}$.
 
Observe that, for any $(\myvec{a},\myvec{b},\myvec{c})\in \mset{S}$, 
the number of couples $(\myvec{b'},\myvec{c'})\in J^N\times K^N$ such that $(\myvec{a},\myvec{b'},\myvec{c'})\in \mset{S}$ is 
independent of~$\myvec{a}$. This number is thus equal to $|\mset{S}|/|\mset{S}_1|$, and will be denoted $\Nn_1$. Similarly, again 
for any $(\myvec{a},\myvec{b},\myvec{c})\in \mset{S}$, the number of elements of $\mset{S}$ with second coordinate $\myvec{b}$
is $\Nn_2=|\mset{S}/|\mset{S}_2|$, and the number of elements of $\mset{S}$ with third coordinate $\myvec{c}$
is $\Nn_3=|\mset{S}|/|\mset{S}_3|$. 

For any $\myvec{a}\in \mset{S}_1$, the number of couples $(\myvec{b},\myvec{c})$, 
such that $(\myvec{a},\myvec{b},\myvec{c})$ is in $\mset{S}$ and $(\myvec{a},\myvec{b},\myvec{c})$ has type $Q$ is
exactly $\Phi'_{1,N}(P,Q)$ if $Q$ is a $1/N$-valued distribution on $S$ compatible with $P$, and zero otherwise.
Thus
\[
\Nn_1=\sum_{Q} \Phi'_{1,N}(P,Q),
\]
where the sum is over all $1/N$-valued distributions $Q$ compatible with~$P$. 
Similarly, $\Nn_2=\sum_{Q} \Phi'_{2,N}(P,Q)$ and $\Nn_3=\sum_{Q} \Phi'_{3,N}(P,Q)$.
Observe that, if $(\myvec{a},\myvec{b},\myvec{c})$ is in $\mset{S}$ and $(\myvec{a},\myvec{b},\myvec{c})$ has type $Q$,
then 
\begin{equation}\label{eq8}
\begin{split}
V_\rho(t^{\otimes N}(\myvec{a},\myvec{b},\myvec{c}))&\ge \prod_{\ell=1}^n V_\rho(t(a_\ell,b_\ell,c_\ell))\\
&= \prod_{\myvec{s}\in S} [V_\rho(t(\myvec{s}))]^{N\cdot Q(\myvec{s})},
\end{split}
\end{equation}
where the inequality is obtained using the supermultiplicativity of the value.

Now consider the trilinear form $\mset{t}\otimes \mset{t}_{\mathsf C}\otimes \mset{t}_{\mathsf C^2}$. Let
\[
\Lambda\subseteq \Int^{3N}\times \Int^{3N}\times \Int^{3N}
\]
be the set of all triples
$(\myvec{u},\myvec{v},\myvec{w})$ with $\myvec{u}=(\myvec{a},\myvec{b'},\myvec{c''})\in \mset{S}_1\times \mset{S}_2\times \mset{S}_3$,
$\myvec{v}=(\myvec{b},\myvec{c'},\myvec{a''})\in \mset{S}_2\times \mset{S}_3\times \mset{S}_1$ and
$\myvec{w}=(\myvec{c},\myvec{a'},\myvec{b''})\in \mset{S}_3\times \mset{S}_1\times \mset{S}_2$
satisfying the conditions
\[
(\myvec{a},\myvec{b},\myvec{c})\in \mset{S}, \:\:\:(\myvec{a'},\myvec{b'},\myvec{c'})\in \mset{S}, \:\:\:(\myvec{a''},\myvec{b''},\myvec{c''})\in \mset{S}.
\]
The decomposition of $\mset{t}$ of Eq.~(\ref{eq:dec}) naturally induces a decomposition of
$\mset{t}\otimes \mset{t}_{\mathsf C}\otimes \mset{t}_{\mathsf C^2}$ with support $\Lambda$.
In this decomposition, 
the component associated with a triple $(\myvec{u},\myvec{v},\myvec{w})\in\Lambda$,
which we denote $\hat t(\myvec{u},\myvec{v},\myvec{w})$,
 is
\begin{equation}\label{eq7}
\hat t(\myvec{u},\myvec{v},\myvec{w})=
t^{\otimes N}(\myvec{a},\myvec{b},\myvec{c})\otimes 
t^{\otimes N}_{\mathsf C}(\myvec{b'},\myvec{c'},\myvec{a'}) \otimes 
t^{\otimes N}_{\mathsf C^2}(\myvec{c''},\myvec{a''},\myvec{b''}).
\end{equation}
Let $\beta_1,\beta_2,\beta_3\colon\Lambda\to \Int^{3N}$ be the three coordinate functions of~$\Lambda$:
for any $(\myvec{u},\myvec{v},\myvec{w})\in\Lambda$,
\begin{align*}
\beta_1(\myvec{u},\myvec{v},\myvec{w})&=\myvec{u},\:\:\:
\beta_2(\myvec{u},\myvec{v},\myvec{w})=\myvec{v},\:\:\:
\beta_3(\myvec{u},\myvec{v},\myvec{w})=\myvec{w}.
\end{align*}
We have
$
|\beta_\ell(\Lambda)|=T 
$
for each $\ell\in\{1,2,3\}$,
where
\[
T=|\mset{S}_1|\times |\mset{S}_2|\times |\mset{S}_3|=
\Phi_{1,N}(P)\Phi_{2,N}(P)\Phi_{3,N}(P).
\]
Moreover, for any $\ell\in\{1,2,3\}$ and any $(\myvec{u},\myvec{v},\myvec{w})\in\Lambda$,
we have
$|\beta_\ell^{-1}(\beta_\ell(\myvec{u},\myvec{v},\myvec{w}))|=\Nn$, with
\begin{align*}
\Nn&=\Nn_1\Nn_2\Nn_3\\
&=\sum_{Q,Q',Q''} \Phi'_{1,N}(P,Q)\Phi'_{2,N}(P,Q')\Phi'_{3,N}(P,Q''),
\end{align*}
where the sum is over all triples of $1/N$-valued distributions $Q$, $Q'$ and $Q''$ compatible with $P$. 

Let $\Lambda^\ast$ be the subset of $\Lambda$ containing all those 
$(\myvec{u},\myvec{v},\myvec{w})$ such that 
$(\myvec{a},\myvec{b},\myvec{c})$, $(\myvec{a'},\myvec{b'},\myvec{c'})$ and $(\myvec{a''},\myvec{b''},\myvec{c''})$ are of type~$P$.
For any $\ell\in\{1,2,3\}$ and any $(\myvec{u},\myvec{v},\myvec{w})\in\Lambda^\ast$,
we have
$|\beta_\ell^{-1}(\beta_\ell(\myvec{u},\myvec{v},\myvec{w}))\cap \Lambda^\ast|=\Nn^\ast$, where
\[
\Nn^\ast=\Phi'_{1,N}(P,P)\Phi'_{2,N}(P,P)\Phi'_{3,N}(P,P).
\]

We can now use Lemma \ref{lemma:SS}:
for any $\epsilon>0$, if $N$ is taken large enough
there exists a set $\Delta\subseteq \Lambda^\ast$
of size 
\begin{equation}\label{eq9}
|\Delta|\ge c\times \frac{T\Nn^\ast}{\Nn^{1+\epsilon}},
\end{equation}
for some constant $c$ depending only on $\epsilon$,
such that the form 
\begin{equation}\label{eq10}
\hat t=\bigoplus_{(\myvec{u},\myvec{v},\myvec{w})\in\Delta}
\hat t(\myvec{u},\myvec{v},\myvec{w}).
\end{equation}
can be obtained from 
$\mset{t}\otimes \mset{t}_{\mathsf C}\otimes \mset{t}_{\mathsf C^2}$
by zeroing $x$-variables, $y$-variables and 
$z$-variables (which means in particular that 
the form $\hat t$ is a degeneration 
$\mset{t}\otimes \mset{t}_{\mathsf C}\otimes \mset{t}_{\mathsf C^2}$).
Note that the sum is direct since the components $\hat t(\myvec{u},\myvec{v},\myvec{w})$ do not 
share variables, due to the property of Eq.~(\ref{eq4}) in Lemma~\ref{lemma:SS}.
Up to this point, we have thus shown that
\[
\hat t\degen\mset{t}\otimes \mset{t}_{\mathsf C}\otimes \mset{t}_{\mathsf C^2}\degen (t\otimes t_{\mathsf C} \otimes t_{\mathsf C^2})^{\otimes N},
\]
which implies that
\[
V_\rho(t)\ge \left(V_\rho(\hat t)\right)^{\frac{1}{3N}}.
\]

From Eqs.~(\ref{eq8}) and (\ref{eq7}), and from the definition of $\Lambda^\ast$, we know that
\[
V_\rho(\hat t(\myvec{u},\myvec{v},\myvec{w}))\ge
\Bigg(
\prod_{\myvec{s}\in S} [V_\rho(t(\myvec{s}))]^{N\cdot P(\myvec{s})}
\Bigg)^3
\]
for any $(\myvec{u},\myvec{v},\myvec{w})\in\Lambda^\ast$,
since the value is supermultiplicative and invariant under permutation of coordinates.
From Ineq.~(\ref{eq9}) and Eq.~(\ref{eq10}),  we conclude that 
\[
V_\rho(\hat t)\ge
\frac{c\Nn^\ast}{\Nn^{1+\epsilon}}\times \prod_{\ell=1}^N\Phi_{\ell,N}(P)\times \prod_{\myvec{s}\in S} [V_\rho(t(\myvec{s}))]^{3N\cdot P(\myvec{s})},
\]
since the value is superadditive. By using Lemma \ref{lemma:asympt} to approximate each $\Phi_{\ell,N}(P)$,
then taking the power $1/(3N)$, and finally taking the logarithm,
we conclude that, for any $\delta>0$, if $N$ is taken large enough then the following inequality holds:
\[
\log(V_\rho(t))\ge \frac{\sum_{\ell=1}^3H(P_\ell)}{3}+\sum_{\myvec{s}\in S} P(\myvec{s})\log (V_\rho(t(\myvec{s})))-\delta-\zeta,
\]
where $\zeta=\frac{1}{3N}\log (\frac{\Nn^{1+\epsilon}}{c\Nn^\ast})$. 

Finally, note that,
since the number of $1/N$-valued probability distributions on~$S$ is 
at most $(N+1)^{|S|}$, we have
\begin{align*}
\Nn&\le (N+1)^{3|S|}\times \prod_{\ell=1}^N \max_{Q}\left[\Phi'_{\ell,N}(P,Q)\right],
\end{align*}
where each maximum is over all $1/N$-valued distributions $Q$ over $S$ compatible with $P$.
Using this inequality, and Lemma~\ref{lemma:asympt} combined with the triangular inequality, we conclude that 
\begin{align*}
\zeta&\le \frac{3|S|(1+\epsilon)\log(N+1)-\log c}{3N}+(1+\epsilon)\left(\max_Q\left[H(Q)\right]-\sum_{\ell=1}^3 \frac{H(P_\ell)}{3}\right)\\
&\hspace{56mm}-\left(H(P)-\sum_{\ell=1}^3 \frac{H(P_\ell)}{3}\right) +(2+\epsilon)\delta\\
&\le\Gamma_S(P)+\frac{3|S|(1+\epsilon)\log(N+1)-\log c}{3N}+\epsilon\left(\max_Q\left[H(Q)\right]+\delta\right)+2\delta.
\end{align*}
Note that $H(Q)\le \log|S|$ for any distribution $Q$ on $S$. 
By letting $\delta$ and $\epsilon$ go to zero, and 
letting $N$ go to infinity, 
we obtain the inequality 
\begin{align*}
\log(V_\rho(t))\ge&
\sum_{\ell=1}^3\frac{H(P_\ell)}{3}+\sum_{\myvec{s}\in S} P(\myvec{s})\log (V_\rho(\myvec{s}))-\Gamma_S(P),
\end{align*}
as claimed.
\end{proof}

\section{Another Application}
In this appendix we apply our theory to another trilinear 
form, that we denote $\B$, which was 
proposed by Coppersmith and Winograd 
in Section~9 of \cite{Coppersmith+90} and is asymmetric (i.e., the decomposition of $\B$ 
is not $\mathbb{S}_3$-invariant).
Coppersmith and Winograd showed how the analysis of $\B$
gives the bound $\omega<2.46015$.
The upper bounds on~$\omega$ we obtain from the analysis of the powers of 
$\B$ described in this section are summarized in Table~\ref{table:chart3}.
They are worse than the upper bounds on $\omega$ obtained from~$\A$, but
they illustrate how our techniques work when studying asymmetric forms.
All the programs used 
for the computations presented here are available
at~\cite{files}.

\begin{table}[h!]
\renewcommand\arraystretch{1}
\begin{center}
\caption{Upper bounds on $\omega$ obtained by studying the $m$-th power of the 
construction $\B$.}\label{table:chart3}\vspace{3mm}
\begin{tabular}{|c|l|}
\hline
$m$& Upper bound on $\omega$\\ 
\hline
1& $\omega<2.46015$ (see \cite{Coppersmith+90})\\
\hline
2&  $\omega<2.44998$ \\
\hline
4&  $\omega<2.44303$ \\
\hline
8&  $\omega<2.44278$ 
\tabularnewline\hline
\end{tabular}
\end{center}
\end{table}

\subsection{Construction}
Let $\field$ be an arbitrary field.
Let $q$ be a positive integer, and consider three 
vector spaces $U$, $V$ and $W$ of dimension $q+1$ over $\field$. 
Take 
a basis $\{x_1,\ldots,x_{q+1}\}$ of $U$, 
a basis $\{y_0,\ldots,y_{q}\}$ of $V$, 
a basis $\{z_0,\ldots,z_{q}\}$ of $W$. 

Consider the following trilinear form $\B$ on
$(U,V,W)$:
\[
\B=x_{q+1}y_0z_0+\sum_{i=1}^q(x_iy_iz_0+x_iy_0z_i).
\]
Coppersmith and Winograd \cite{Coppersmith+90} showed
that $\underline R(\B)=q+1$.
Consider the following decomposition of $U$, $V$ and $W$:
\[
U=U_1\oplus U_2,\:\:\: V=V_0\oplus V_1,\:\:\: W=W_0\oplus W_1,
\]
where
$U_1=\myspan{x_1,\ldots,x_q}$ and $U_2=\myspan{x_{q+1}}$,
$V_0=\myspan{y_0}$ and $V_1=\myspan{y_1,\ldots,y_{q}}$,
$W_0=\myspan{z_0}$ and $W_1=\myspan{z_1,\ldots,z_{q}}$.
This decomposition induces a decomposition $D$ of $\B$ with
tight
support
\[
S=\{(2,0,0),(1,1,0),(1,0,1)\}
\]
 and components
\begin{align*}
\B(2,0,0)&=x_{q+1}y_0z_0\cong\braket{1,1,1},\\
\B(1,1,0)&=\sum_{i=1}^qx_iy_iz_0\cong\braket{1,q,1},\\
\B(1,0,1)&=\sum_{i=1}^qx_iy_0z_i\cong\braket{q,1,1}.
\end{align*}
For any $\rho\in[2,3]$, we have $V_\rho(\B(2,0,0))=1$ and
$V_\rho(\B(1,1,0))=V_\rho(\B(1,0,1))\ge q^{\rho/3}$.

We now use Theorem \ref{th:main} to obtain an upper bound on~$\omega$.
Let $P$ be a probability distribution in $\D(S)$.
Let us write
$P(2,0,0)=a_1$, $P(1,1,0)=a_2$ and $P(1,0,1)=a_3$. 
The marginal distributions of $P$ are
$\myvec{P_1}=(a_1,a_2+a_3)$, $\myvec{P_2}=(a_2,a_1+a_3)$
and $\myvec{P_3}=(a_3,a_1+a_2)$.
Since the only probability distribution on $S$ compatible with $P$ is 
$P$, we have $\Gamma_S(P)=0$. 
Theorem~\ref{th:main} 
thus gives
\[
V_\rho(\B)\ge
\frac
{q^{(a_2+a_3)\rho/3}}
{
\left[
a_1^{a_1}
a_2^{a_2}
a_3^{a_3}
(a_1+a_2)^{a_1+a_2}
(a_1+a_3)^{a_1+a_3}
(a_2+a_3)^{a_2+a_3}
\right]^{1/3}
},
\]
for any $\rho\in[2,3]$.
Evaluating this expression with $q=4$, $a_1=0.0302$, $a_2=a_3=0.4849$
and $\rho=2.46015$ gives $V_\rho(\B)> 5.0000005$.
Using Theorem \ref{th:value} and the fact that $\underline R(\B)=q+1$, 
we conclude that $\omega<2.46015$. This is the same upper bound as the bound
found in Section 9 of \cite{Coppersmith+90}.

\subsection{Analyzing the powers using Algorithm $\mathcal{A}$}
For any $r\ge 1$, we now consider the tensor $\B^{\otimes 2^r}$ and analyze 
it using the framework of Section \ref{section:powers}.
The support of its decomposition $D^{2^r}$ is the set of all triples
\[
(a,b,c)\in\{2^r,\ldots,2^{r+1}\}\times \{0,\ldots,2^{r}\}\times\{0,\ldots,2^{r}\}
\]
such that $a+b+c=2^{r+1}$.
Note that the decomposition~$D$ of $\B$ satisfies the conditions
 of Lemma~\ref{lemma:sym2}
for the subgroup $L=\{\identi,(2\:3)\}$ of $\mathbb{S}_3$, which implies that 
$D^{2^r}$ is $\{\identi,(2\:3)\}$-invariant. Thus we only need to consider probability 
distributions in $\D(\supp(\B^{\otimes 2^r}),\{\identi,(2\:3)\})$, i.e., the probability distributions 
$P\in\D(\supp(\B^{\otimes 2^r}))$ 
such that 
\[
P(a,b,c)=P(a,c,b) \textrm{ for all } (a,b,c)\in\supp(\B^{\otimes 2^r}).
\]
This set of probability distributions can be parametrized by 
\[
\dim(\F(\supp(\B^{\otimes 2^r}),\{\identi,(2\:3)\}))
\]
parameters. 
We also need 
a lower bound on the value of each component $\B^{\otimes 2^r}(a,b,c)$
before applying $\mathcal{A}$ on $t^{\otimes 2^r}$. These lower bounds 
are computed recursively 
by applying Algorithm~$\mathcal{A}$ on the decomposition 
$S^{2^r}_{abc}$ given in~Section~\ref{section:powers}. 
Actually, we do not need to apply~$\mathcal{A}$ when $a=0$, $b=0$ or $c=0$,
since a lower bound on the value can be found analytically in this case,
as stated in the following lemma.
\begin{lemma}\label{lemma:value-s}
For any $r\ge 0$ and any $b\in\{0,1,\ldots,2^r\}$, 
\[
V_\rho\big(\B^{\otimes 2^r}(2^{r+1}-b,b,0)\big)\ge \big({2^r \choose b}q^b\big)^{\rho/3}.
\]
\end{lemma}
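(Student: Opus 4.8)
The plan is to identify the component $\B^{\otimes 2^r}(2^{r+1}-b,b,0)$ explicitly as a single inner‑product tensor and then invoke the inequality $V_\rho(\braket{m,n,p})\ge (mnp)^{\rho/3}$ recalled in Section~2. Unwinding the recursive definition of the decomposition $D^{2^r}$ from Section~\ref{section:powers}, the component $\B^{\otimes 2^r}(a,b,c)$ equals the sum of all tensor products $\bigotimes_{\ell=1}^{2^r}\B(\myvec{s}_\ell)$ over tuples $(\myvec{s}_1,\dots,\myvec{s}_{2^r})$ of elements of $S=\{(2,0,0),(1,1,0),(1,0,1)\}$ whose coordinatewise sums are $(a,b,c)$. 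When $c=0$, nonnegativity of the coordinates forces every $\myvec{s}_\ell$ into $\{(2,0,0),(1,1,0)\}$, and since only $(1,1,0)$ contributes to the middle coordinate, exactly $b$ of the factors must equal $\B(1,1,0)$ and the other $2^r-b$ must equal $\B(2,0,0)$. Hence $\B^{\otimes 2^r}(2^{r+1}-b,b,0)$ is a sum of $\binom{2^r}{b}$ terms, one for each subset $A\subseteq\{1,\dots,2^r\}$ of size $b$ marking the positions that carry $\B(1,1,0)$.

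I would then observe that all $\binom{2^r}{b}$ of these terms share one and the same $z$‑variable: both $\B(2,0,0)=x_{q+1}y_0z_0$ and $\B(1,1,0)=\sum_{i=1}^{q}x_iy_iz_0$ involve only $z_0$, so every term contains exactly the basis vector $z_0\otimes\cdots\otimes z_0$ of $W^{\otimes 2^r}$. Factoring this variable out, the term attached to $A$ becomes $\bigl(\bigotimes_{\ell\in A}\sum_{i=1}^{q}x^{(\ell)}_iy^{(\ell)}_i\bigr)\otimes\bigl(\bigotimes_{\ell\notin A}x^{(\ell)}_{q+1}y^{(\ell)}_0\bigr)$, an inner product of $q^b$ pairs of variables; and for $A\ne A'$ these pairs of variables are disjoint, since the $x$‑index (respectively the $y$‑index) of each monomial records precisely which positions carry $x_{q+1}$ (respectively $y_0$) and hence determines the complement of $A$. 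Therefore $\B^{\otimes 2^r}(2^{r+1}-b,b,0)$ is isomorphic to the tensor $\braket{1,\binom{2^r}{b}q^b,1}$, and $V_\rho(\braket{1,\binom{2^r}{b}q^b,1})\ge\bigl(\binom{2^r}{b}q^b\bigr)^{\rho/3}$ gives the lemma.

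The step I expect to need the most care is this last bookkeeping point: checking that across the $\binom{2^r}{b}$ subsets the $x$‑ and $y$‑variables are genuinely disjoint while the $z$‑variable is genuinely common, so that the component collapses to a single inner product of dimension $\binom{2^r}{b}q^b$ rather than to a direct sum of $\binom{2^r}{b}$ inner products of dimension $q^b$. An alternative that avoids the explicit monomial bookkeeping is an induction on $r$ via the recursive description of $D^{2^r}$: assuming $\B^{\otimes 2^{r-1}}(2^r-b',b',0)\cong\braket{1,\binom{2^{r-1}}{b'}q^{b'},1}$ for all $b'$, one has $\B^{\otimes 2^r}(2^{r+1}-b,b,0)=\sum_{b_1+b_2=b}\B^{\otimes 2^{r-1}}(2^r-b_1,b_1,0)\otimes\B^{\otimes 2^{r-1}}(2^r-b_2,b_2,0)$, each summand is $\braket{1,\binom{2^{r-1}}{b_1}\binom{2^{r-1}}{b_2}q^b,1}$, the summands are glued along their common $z$‑variable, and Vandermonde's identity $\sum_{b_1+b_2=b}\binom{2^{r-1}}{b_1}\binom{2^{r-1}}{b_2}=\binom{2^r}{b}$ closes the induction (the base case $r=0$ being immediate from $\B(2,0,0)\cong\braket{1,1,1}$ and $\B(1,1,0)\cong\braket{1,q,1}$).
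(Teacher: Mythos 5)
Your proposal is correct. The paper's own proof is one sentence: it asserts, ``by recurrence on $r$,'' that $\B^{\otimes 2^r}(2^{r+1}-b,b,0)\cong\braket{1,\binom{2^r}{b}q^b,1}$ and then invokes the definition of the value --- exactly the inductive argument you sketch at the end of your proposal (base case $r=0$ from $\B(2,0,0)\cong\braket{1,1,1}$, $\B(1,1,0)\cong\braket{1,q,1}$; induction step gluing $\B^{\otimes 2^{r-1}}(2^r-b_1,b_1,0)\otimes\B^{\otimes 2^{r-1}}(2^r-b_2,b_2,0)$ along the common $z$-monomial and closing with Vandermonde). What you present as your primary argument is a different, non-inductive route: unwind $D^{2^r}$ fully, use $c=0$ to exclude $(1,0,1)$ from every tensor slot, index the $\binom{2^r}{b}$ surviving summands by the set $A$ of positions carrying $\B(1,1,0)$, observe that every summand shares the single $z$-monomial $z_0^{\otimes 2^r}$, and check that the $x$- and $y$-monomials from distinct $A$'s are disjoint because the positions carrying $x_{q+1}$ (resp.\ $y_0$) reconstruct $A^c$. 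This bookkeeping argument is correct (the disjointness check is exactly the delicate point, and you identify and resolve it properly) and is somewhat more explicit than the paper's terse recurrence; it has the merit of exhibiting the isomorphism to $\braket{1,\binom{2^r}{b}q^b,1}$ in one step rather than relying on the inductive gluing being ``straightforward,'' at the cost of more notation. Either route suffices, and your write-up contains both.
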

\begin{proof}
It is straightforward to check, by recurrence on~$r$, that 
\[
\B^{\otimes 2^r}(2^{r+1}-b,b,0)\cong \braket{1,{2^r \choose b}q^b,1}
\]
for any $r\ge 0$ and any $b\in\{0,1,\ldots,2^r\}$. We can then use the definition of the value.
\end{proof}
Table \ref{table:A-param} presents, for $r\in\{1,2,3,4,5\}$,
the number of variables in the global optimization problem
and the associated 
compatibility degree. We show below how to solve these
optimizations problems using Algorithm $\mathcal{A}$ for 
$r=1,2,3$.

\begin{table}[h!]
\renewcommand\arraystretch{1}
\begin{center}
\caption{Parameters for the analysis of $\B^{\otimes 2^r}$.}\label{table:A-param}\vspace{2mm}
\begin{tabular}{|c|c|c|}
\hline
$\!r\!$&$\dim(\F(\supp(\B^{\otimes 2^r}),\{\identi,(2\:3)\}))$&$\chi(\supp(\B^{\otimes 2^r}),\{\identi,(2\:3)\})$\bigstrut\\ 
\hline
\!1\!& 4&0\\
\!2\!& 9&1\\
\!3\!& 25&9\\
\!4\!& 81&49\\
\!5\!& 289&225\\
\hline
\end{tabular}
\end{center}
\end{table}

The case $r=1$ is easy to deal with, since the compatibility degree is zero.
Note that lower bounds on the
values of all components but one can be computed directly using Lemma~\ref{lemma:value-s},
as summarized in Table~\ref{table:A-power2}. A lower bound on the value of the component $\B^{\otimes 2}(2,1,1)$ is computed
using its decomposition $D^2_{211}$, which has support $S^2_{211}=\{(1,1,0),(1,0,1)\}$: we take the probability 
distribution on $S^2_{211}$ that assigns probability 1/2 to both $(1,1,0)$ and $(1,0,1)$, and,
from Theorem \ref{th:main}, we obtain the lower bound
\[
V_\rho(\B^{\otimes 2}(2,1,1))\ge \frac{(q^{\rho/3}\times q^{\rho/3})^{\frac{1}{2}+\frac{1}{2}}}{\left[1\times \frac{1}{2}\times \frac{1}{2}\right]^{1/3}}=2^{2/3}q^{2\rho/3}.
\] 
Let $P$ be a probability distribution in $\D(\supp(\B^{\otimes 2}),\{\identi,(2\:3)\})$, and 
write $P(4,0,0)=a_1$, $P(3,1,0)=a_2$, $P(2,2,0)=a_3$ and
$P(2,1,1)=a_4$. 
The partial distributions are
$\myvec{P_1}=(a_1,2a_2,2a_3+a_4)$
and
$\myvec{P_2}=\myvec{P_3}=(a_3,a_2+a_4,a_1+a_2+a_3)$.
Since the compatibility degree is zero, 
we have $\Gamma_{\supp(\B^{\otimes 2})}(P)=0$.
Theorem \ref{th:main} 
implies that 
\[
V_\rho(\B^{\otimes 2})\ge
\frac
{(2q)^{2a_2\rho/3}q^{2(2a_3+a_4)\rho/3}}
{
\left[
H(P_1)H(P_2) H(P_3)
\right]^{1/3}
}.
\]
Evaluating the right-hand side with $q=3$, $a_1=0.00299$, $a_2=0.06888$,
$a_3=0.21536$, $a_4=0.42853$
and $\rho=2.44998$ gives $V_\rho(\B^{\otimes 2})> 16.00002$.
Using Theorem \ref{th:value} and the fact that $\underline R(t^{\otimes 2})\le (q+1)^2$, 
we conclude that $\omega<2.44998$.
\begin{table}[h!]
\renewcommand\arraystretch{1}
\begin{center}
\caption{Components of $\B^{\otimes 2}$ and lower bounds on their values.}\label{table:A-power2}\vspace{3mm}
\begin{tabular}{|c|c|c|}
\hline
\multirow{2}{*}{$(a,b,c)$}&\multirow{2}{*}{$\B^{\otimes 2}(a,b,c)$}&lower bound on\\
&&$V_\rho(\B^{\otimes 2}(a,b,c))$\\ 
\hline
(4,0,0)& $\B(2,0,0)\otimes \B(2,0,0)$&$1$\\
\hline
(3,1,0)& $\B(2,0,0)\otimes \B(1,1,0)+\B(1,1,0)\otimes \B(2,0,0)$&$(2q)^{\rho/3}$\\
\hline
(2,2,0)& $\B(1,1,0)\otimes \B(1,1,0)$&$q^{2\rho/3}$\\
\hline
(2,1,1)& $\B(1,1,0)\otimes \B(1,0,1)+\B(1,0,1)\otimes \B(1,1,0)$&$2^{2/3}q^{2\rho/3}$\\
\hline
\end{tabular}
\end{center}
\end{table}

For $r>1$, we use Algorithm $\mathcal{A}$ to compute lower bounds on
the value of each component of $\B^{\otimes 2^r}$, and then this algorithm again to compute
a lower bound on $V_\rho(\B^{\otimes 2^r})$.
For instance, for $r=2$, the lower bounds on the values of the components we obtained, 
for $\rho=2.44303$ and $q=3$, are given in Table~\ref{table:A-power4}. 
For the choice $\hat Q\in \D(\supp(\B^{\otimes 4}),\{\identi,(2\:3)\})$ described in this table, we obtain
\[
V_{2.44303}(\B^{\otimes 4})> 256.00084>(q+1)^4,
\]
 which gives the upper bound $\omega<2.44303$. 
For $r=3$ the lower bounds on the values of the components we obtained, 
for $\rho=2.44278$ and $q=3$, are given in Table~\ref{table:A-power8}.
For the choice $\hat Q\in \D(\supp(\B^{\otimes 8}),\{\identi,(2\:3)\})$ described in this table, we obtain
\[
V_{2.44278}(\B^{\otimes 8})> 65537.50128>(q+1)^8,
\] 
which gives the upper bound $\omega<2.44278$.

\begin{table}[h!]
\renewcommand\arraystretch{1}
\begin{center}
\caption{Values of the components and optimal probability distributions for $\B^{\otimes 4}$, 
computed with Algorithm $\mathcal{A}$
for $\rho=2.44303$ and $q=3$. 
In this table, $d$ represents $\dim(\F(S^{4}_{abc},\{\identi,\pi\}))$ 
and $\chi$ represents $\chi(S^4_{abc},\{\identi,\pi\})$,
where $\pi$ is defined in Lemma \ref{lemma:sym3}.}\label{table:A-power4}\vspace{3mm}
\begin{tabular}{|c|c|c|r|c|c|}
\hline
\multirow{2}{*}{$\!\!(a,b,c)\!\!$}& 
\multirow{2}{*}{$\!d\!$}& 
\multirow{2}{*}{$\!\chi\!$}& 
$\!\!$lower bound on$\!\!$&
\multirow{2}{*}{$\hat P(a,b,c)$}&
\multirow{2}{*}{$\hat Q(a,b,c)$}\\ 
&&&$\!\!V_\rho(\B^{\otimes 8}(a,b,c))\!\!$&&\\
\hline
\!\!(8,0,0)\!\!&\!--\!&\!--\!&1.000000&    0.00000628&0.00000628\\
\!\!(7,1,0)\!\!&\!--\!&\!--\!&7.565263&    0.00049857&0.00049857\\
\!\!(6,2,0)\!\!&\!--\!&\!--\!&25.749174&  0.01413336&0.00945103\\
\!\!(6,1,1)\!\!&\!2\!&\!0\!&31.821055&  0.00000000&0.00936465\\
\!\!(5,3,0)\!\!&\!--\!&\!--\!&45.279860&  0.04378854&0.04847087\\
\!\!(5,2,1)\!\!&\!2\!&\!0\!&74.273471&  0.08053586&0.07585354\\
\!\!(4,4,0)\!\!&\!--\!&\!--\!&35.823008& 0.03895372&0.03895372\\
\!\!(4,3,1)\!\!&\!1\!&\!0\!&90.268324&  0.18465128&0.17996896\\
\!\!(4,2,2)\!\!&\!2\!&\!0\!&118.284967&0.27487107&0.28423572\\
\hline
\end{tabular}
\end{center}
\end{table}

\begin{table}[h!]
\renewcommand\arraystretch{1}
\begin{center}
\caption{Values of the components and optimal probability distributions for $\B^{\otimes 8}$, 
computed with Algorithm $\mathcal{A}$ for $\rho=2.44278$ and $q=3$. 
In this table, $d$ represents $\dim(\F(S^{8}_{abc},\{\identi,\pi\}))$ 
and $\chi$ represents $\chi(S^8_{abc},\{\identi,\pi\})$,
where $\pi$ is defined in Lemma \ref{lemma:sym3}.}\label{table:A-power8}\vspace{3mm}
\begin{tabular}{|c|c|c|r|c|}
\hline
\multirow{2}{*}{$\!(a,b,c)\!$}& 
\multirow{2}{*}{$d$}& 
\multirow{2}{*}{$\chi$}& 
lower bound on&
\multirow{2}{*}{$\hat Q(a,b,c)$}\\ 
&&&$V_\rho(\B^{\otimes 8}(a,b,c))$&\\
\hline
(16,0,0)& --&--&1.000000&0.000000000015\\
(15,1,0)& --&--&13.299974&0.000000005812\\
(14,2,0)& --&--&90.232794&0.000000471955\\
(14,1,1)& 2&0&100.217177&0.000000430768\\
(13,3,0)& --&--&388.134137&0.000014910193\\
(13,2,1)& 3&0&494.786574&0.000016086705\\
(12,4,0)& --&--&1138.656007&0.000232464978\\
(12,3,1)& 4&0&1646.513078&0.000292130426\\
(12,2,2)& 5&0&1938.340810&0.000345494424\\
(11,5,0)& --&--&2322.647734&0.001698208436\\
(11,4,1)& 4&0&3892.471176&0.002751596947\\
(11,3,2)& 5&1&5277.740474&0.003790333325\\
(10,6,0)& --&--&3231.226603&0.005184942648\\
(10,5,1)& 3&0&6556.194763&0.012338195249\\
(10,4,2)& 5&0&10275.997802&0.021913786571\\
(10,3,3)& 5&1&11840.502586&0.025524148318\\
(9,7,0)& --&--&2850.068662&0.005916530615\\
(9,6,1)& 2&0&7506.012822&0.023247197472\\
(9,5,2)& 3&0&13971.351951&0.060638752558\\
(9,4,3)& 4&0&18619.666989&0.091066196486\\
(8,8,0)& --&--&1282.348367&0.001028321368\\
(8,7,1)& 1&0&5129.393466&0.012275633494\\
(8,6,2)& 2&0&11824.370890&0.052871224881\\
(8,5,3)& 2&0&18770.018800&0.116611193180\\
(8,4,4)& 3&0&21780.677417&0.150353559875\\
\hline
\end{tabular}
\end{center}
\end{table}

\end{document}